\definecolor{DarkPurple}{RGB}{119,30,125}
\long\def\ca#1\cb{} 
\newcommand{\ketbra}[2]{| \hspace{1pt} #1 \rangle \langle #2 \hspace{1pt} |}
\newcommand{\norm}[1]{\left\lVert #1 \right\rVert}
\newcommand{\ket}[1]{|#1\rangle}               
\newcommand{\bra}[1]{\langle #1|}              
\newcommand{\FC}{\mathcal{F}}
\newcommand{\OC}{\mathcal{O}}
\renewcommand{\leq}{\leqslant}
\newcommand{\ot}{\otimes}
\renewcommand{\bm}[1]{\boldsymbol{#1}}
\newcommand{\tr}[1]{\mathrm{tr}\left[#1\right]}
\newcommand{\E}[1]{\mathbb{E}\left[#1\right]}
{}
{}
\newtheorem{theorem}{Theorem}
\newtheorem{lemma}{Lemma}
\begin{document}

\title{Measuring Non-Gaussian Magic in Fermions: Convolution, Entropy, and the Violation of Wick's Theorem and the Matchgate Identity}

\author{Luke Coffman}
\affiliation{Department of Physics, University of Colorado, Boulder, Colorado 80309, USA}
\affiliation{JILA, NIST and University of Colorado, Boulder, Colorado 80309, USA}
\author{Graeme Smith}
\affiliation{Institute for Quantum Computing and Department of Applied Math, University of Waterloo}
\author{Xun Gao}
\affiliation{Department of Physics, University of Colorado, Boulder, Colorado 80309, USA}
\affiliation{JILA, NIST and University of Colorado, Boulder, Colorado 80309, USA}

\begin{abstract}
    Classically hard-to-simulate quantum states, or "magic" states, are prerequisites to quantum advantage, highlighting an apparent separation between classically and quantumly tractable problems. Classically simulable states such as Clifford circuits on stabilizer states, free bosonic states, free fermions, and matchgate circuits are all in some sense Gaussian. While free bosons and fermions arise from quadratic Hamiltonians, recent works have demonstrated that bosonic and qudit systems converge to Gaussians and stabilizers under convolution. In this work, we similarly identify convolution for fermions and find efficient measures of non-Gaussian magic in pure fermionic states. We demonstrate that three natural notions for the Gaussification of a state—(1) the Gaussian state with the same covariance matrix, (2) the fixed point of convolution, and (3) the closest Gaussian in relative entropy—coincide by proving a central limit theorem for fermionic systems. We then utilize the violation of Wick's theorem and the matchgate identity to quantify non-Gaussian magic in addition to a SWAP test.
\end{abstract}
\maketitle
The classical simulability of quantum systems is a question of ubiquitous interest to the quantum information science community. While efficient classical descriptions allow for the detailed analysis of complex systems, challenging-to-simulate quantum systems are necessary to surpass classical information processing capabilities. Consequently, efficient methods of quantifying hard-to-simulate quantum resources offer valuable insight into the apparent gap between classically and quantumly tractable problems.

Since the 1940s, several classes of classically simulable quantum systems have been identified, including unentangled qubit states, Gaussian states in quantum optics~\cite{caves_new_1985}, and free fermions in condensed matter~\cite{lieb_two_1961,barouch_statistical_1971,kaufman_crystal_1949}. A foundational addition to this list were stabilizer circuits composed of Clifford gates by Gottesman and Knill~\cite{gottesman_heisenberg_1998}, demonstrating that entanglement alone is insufficient for classical insimulability. Therefore, the "magic" of quantum computation originates from non-stabilizer elements~\cite{bravyi_universal_2005,knill_fault-tolerant_2004}, making \textit{magic} the resource elevating quantum systems to classically hard.

Clifford circuits map Pauli strings to Pauli strings under conjugation, that is, they normalize the Pauli group~\cite{Nielsen_Chuang_2010}. In linear optics or free bosons, discrete displacements in spin (Pauli operators) are substituted for continuous displacements in phase space (Weyl or displacement operators). Bartlett et al.~\cite{bartlett_efficient_2002} demonstrated that Gaussian circuits normalize the Weyl group of free bosons and are classically simulable mirroring the Gottesman-Knill theorem. Gaussian states are Gaussian in that they are thermal states of Hamiltonians quadratic in the quadratures $x$ and $p$.  Thus, magic in free bosons arises from non-Gaussian elements~\cite{gottesman_encoding_2001}.

Another class of classically-simulable circuits introduced by Valiant are matchgates~\cite{valiant_quantum_2001}. Matchgates turn out to be the Jordan-Wigner transformation of free fermions~\cite{knill_fermionic_2001,terhal_classical_2002,jozsa_matchgates_2008} which are exactly solvable by Bogoliubov transformations~\cite{lieb_two_1961,barouch_statistical_1971,kaufman_crystal_1949}. Free fermions are thermal states of Hamiltonians quadratic in quadratures $a$ and $a^\dagger$, reaffirming the connection between Gaussian properties and classical simulability. Once a non-Gaussian element is introduced, matchgates become universal for quantum computation~\cite{jozsa_matchgates_2008,brod_computational_2013,brod_efficient_2016}, making pure non-Gaussian fermionic states the source of magic~\cite{hebenstreit_all_2019}. Studying magic in fermions is simultaneously vital for quantum computational chemistry~\cite{mcardle_quantum_2020} and an interesting case in between bosons and stabilizers - Gaussian operations on a finite-dimensional space.

Further results on the non-negativity of the Wigner function~\cite{mari_positive_2012} implying Gaussian wave functions over continuous fields for bosons~\cite{hudson_when_1974,soto_when_1983,mandilara_extending_2009} and finite fields for qudits~\cite{gross_hudsons_2006,gross_schurweyl_2021,wootters_wigner-function_1987,wootters_picturing_2003,gross_finite_2005} have made Gaussian properties a key sign of efficient classical simulation. Beyond viewing Gaussians as the exponential of a quadratic form, in classical probability theory, Gaussians are simultaneously the maximum entropy distributions for fixed mean and covariance matrix and invariant under self-convolution. Indeed by the classical Central Limit Theorem (CLT) they are \textit{uniquely} invariant under self-convolution providing an alternative definition of Gaussian. While quantum CLTs have existed since the 1970s~\cite{lenczewski_quantum_1995,Cushen_Hudson_1971,giri_algebraic_1978,hudson_quantum-mechanical_1973,von_waldenfels_algebraic_1978}, only recently has quantum convolution been identified and utilized for quantifying magic in bosons~\cite{becker_convergence_2021,beigi_towards_2023} and qudits~\cite{bu_discrete_2023,bu_quantum_2023,bu_stabilizer_2023}.

In this paper, we define and develop \textit{efficient measures} of non-Gaussian magic in pure fermionic states $\rho$ by studying the difference with its corresponding (possibly mixed) Gaussian state $\rho_G$. Inspired by a geometric interpretation of classical convolution, we define fermionic convolution through a beam splitter interaction in analogy to bosonic and qudit systems. This leads to at least three possible choices for $\rho_G$: (1) the Gaussian state with the same covariance matrix, (2) the Gaussian state obtained as the fixed point of self-convolution, and (3) the closest Gaussian state to $\rho$ in relative entropy. We demonstrate that all three Gaussifications coincide, by analyzing the connected correlation functions or cumulants of $\rho$. Cumulants generalize covariance and capture intrinsic $k$-body interactions not expressible as combinations of low-order correlations. Since Gaussians are completely characterized by their second-order cumulants or covariance matrix and uniquely fixed under convolution, higher-order cumulants $(k\ge 2)$ of $\rho$ must decay to zero under self-convolution. We derive this decay rate for an $m$-mode state $\rho$ after $n$ self-convolutions, thereby obtaining a CLT for fermions. We also study another method of converging to $\rho_G$ by coupling to a Gaussian thermal bath that achieves rapid mixing in $\log m$ time through convolving $\rho$ and $\rho_G$. Finally, since Gaussians have the maximum entropy for a given covariance matrix, the Gaussian with the minimum relative entropy to $\rho$ must have the same covariance matrix as $\rho$, allowing us to refer to $\rho_G$ without ambiguity. From this analysis, we arrive at multiple efficient measures of non-Gaussian magic. The first measure is the entropy of $\rho_G$, the second is the violation of a fermionic matchgate identity derived from convolution, the third is the violation of Wick's theorem~\cite{wick_evaluation_1950}, and the fourth is the probability of failure in a SWAP test.

Before arriving at our measures of non-Gaussian magic, we need an appropriate notion of convolution. Classically, convolving the probability densities of independent random variables $A$ and $B$ corresponds to the probability density for $\frac{1}{\sqrt{2}}(A+B)$ with additional rescaling to preserve the second-order cumulants. Geometrically, this corresponds to rotating the joint probability density for $(A, B)$ clockwise by $\pi/4$ to $\frac{1}{\sqrt{2}}(A+B,A-B)$ and computing the marginal probability density of the first coordinate. In quantum mechanics, density matrices replace probability density while quadratures $a$ and $a^\dagger$ ($b$ and $b^\dagger$) replace the random variable $A$ ($B$). Therefore, the analog of $\tfrac{1}{\sqrt{2}}(a+b)$ for quantum systems should correspond to the convolution of their associated density matrices. Geometrically, we should perform a real rotation of the joint density matrix $\rho_A\ot \rho_B$ clockwise by $\pi/4$ in the $(a,b)$ phase space and compute the marginal density matrix $\rho_A'$. For $m$-degrees of freedom or $m$-modes the rotations should be component-wise just as the classical multivariant convolution is component-wise on $(A_i,B_i)$. Real rotations, i.e. in the x-z plane, are generated by evolution under the second quantization of Pauli $Y$ for fermions or the beam splitter in bosonic quantum optics (derived in Appendix~\ref{app:lemma_fermionic_beam_splitter})
\begin{align}\label{fermionic_beam_splitter}
        U_\eta = \exp\left(-\arccos(\sqrt{\eta})\sum_{j=1}^m(a_jb^\dagger_j+a^\dagger_jb_j)\right),
\end{align}
where $\eta =\cos(t)^2 \in [0,1]$ controls the rotation in phase space or the probability of reflection and transmittance for two light beams. This has the desired defining action of rotating first-moment quantities
\begin{align}
\begin{split}
    U_\eta \bm{a} U_\eta^\dagger &= \sqrt{\eta}\bm{a}+\sqrt{1-\eta}\bm{b},\\
    U_\eta \bm{b} U_\eta^\dagger &= \sqrt{\eta}\bm{b}-\sqrt{1-\eta}\bm{a},
\end{split}
\end{align}
where $\bm{a}=(a_1^\dagger,a_1,\ldots,a_m^\dagger,a_m)$ is the vector of $2m$ creation and annihilation operators. We therefore define the \textit{quantum convolution} of two $m$-mode quantum states $\rho_A$ and $\rho_B$ as
\begin{align}
    \rho_A \boxplus_\eta \rho_B = \mathrm{tr}_B\left[U_\eta (\rho_A \ot \rho_B)U_\eta^\dagger\right],
\end{align}
where $\mathrm{tr}_B[\cdot]$ corresponds to tracing out system $B$. Setting $\eta=1/2$ recovers our classical intuition for convolution: rotate each $(a_i,b_i)$ in phase space clockwise by $\pi/4=\arccos(\sqrt{1/2})$ and compute the first marginal distribution by adding together all possible realizations for system $B$ ($\mathrm{tr}_B[\cdot]$). We can generalize this idea to the convolution of $n$ quantum states by applying consecutive convolutions under different values of $\eta$. Classically, convolving the probability densities of independent random variables $\{X_i\}_{i=1}^n$ corresponds to the probability density for $X^{(n)}=\frac{1}{\sqrt{n}}\sum_{i=1}^n X_i$. By expressing $X^{(n)}$ as the recurrence relation
\begin{align*}
    X^{(k)}=\sqrt{1-\frac{1}{k}}X^{(k-1)}+\frac{1}{\sqrt{k}}X_{k}
\end{align*}
with initial value $X^{(2)}=\frac{1}{\sqrt{2}}(X_1+X_2)$, we recover our geometric intuition for convolution: rotate the joint probability density for $(X^{(k-1)},X_k)$ clockwise by $\arccos(\sqrt{1-1/k})$ and compute the first marginal probability density to find $X^{(k)}$. The natural analog for the convolution of $n$ quantum states is thus,
\begin{align}
    \rho^{\boxplus k}=\rho^{\boxplus (k-1)}\boxplus_{1-\frac{1}{k}}\rho_k \quad \rho^{\boxplus 2}=\rho_1 \boxplus_{1/2} \rho_2.
\end{align}
Note that throughout this work if $\rho^{\boxplus n}$ is written without specifying the states being convolved it is assumed that $\rho=\rho_1=\rho_2=\cdots=\rho_n$ which is the $n$-fold self-convolution of $\rho$.

In our definition of convolution the $\tfrac{1}{\sqrt{2}}$ prefactor was explicitly chosen to preserve second-order cumulants. As such, zero mean Gaussians are fixed under self-convolution due to the covariance matrix completely characterizing the distribution. A reasonable conjecture is that \textit{all} fixed points of convolution are Gaussian. Indeed this is the conclusion of the classical CLT: the $n$-fold self-convolution of a zero mean probability density $p$ approaches a zero mean Gaussian with the same second-order cumulants as $p$. Crucially, the non-Gaussian behavior characterized by higher-order cumulants must decay to zero. Replacing $p$ with $\rho$ suggests a quantum analog of cumulants and convolution provide natural ways of measuring magic due to Gaussian states being classically simulable. As such, let us review key tools from classical probability and search for similar ideas in quantum mechanics.

Classically, the moments and cumulants of a random variable $X$ describe the spread and variability of its probability density $p$. A useful tool for characterizing variations in a function is the Fourier transform. The Fourier transform of the probability density is called the \textit{characteristic function} of $p$ denoted $\chi_p(\bm{\xi})=\mathbb{E}[e^{-i\bm{\xi} X}]$. Each $k$-th derivative of $\chi_p$ at 0 yields the $k$-th moment of $X$, thus Taylor expanding $\chi_p(\bm{\xi})$ generates the moments of $X$. Incidentally, we will use the name \textit{moment generating function} and characteristic function interchangeably. As opposed to moments, cumulants capture intrinsic $k$-body interactions or correlations that cannot be expressed as products of lower-order interactions. Products of low-order interactions arise from independence such as $\E{XY}=\E{X}\E{Y}$ for independent $X$ and $Y$. Utilizing $\log{\E{\cdot}}$ will separate such products into sums,  motivating the study of $\log \chi_p(\bm{\xi})$ to isolate true $k$-th order contributions. Each $k$-th derivative of $\log \chi_p(\bm{\xi})$ yields the corresponding $k$-th order cumulant of $X$ leading to the name \textit{cumulant generating function}.

Rather than taking the expectation of translations on the real line ($e^{-i\bm{\xi}X}$), in quantum mechanics, we should take the expectation of translations in phase space. In Weyl quantization, displacement operators quantize points in a semi-classical phase space. Crucially, the underlying numbers we quantize should mirror the algebraic properties of the original quantum mechanical system. For bosons, there is a continuum of possible commuting configurations for the system making real or complex numbers a natural choice. However, fermions anti-commute ($\{a_i,a_j^\dagger\}=\delta_{i,j}$) and can never occupy the same state due to the Pauli exclusion principle ($\{a_i,a_j\}=0$) resulting in a finite-dimensional Hilbert space. Thus our choice of numbers, known as Grassmann numbers $\xi_i$ ($\xi_i^*$), should anti-commute and square to zero~\cite{hudson_translation-invariant_1980,cahill_density_1999}. Replacing complex numbers for Grassmann numbers in the bosonic displacement operator allows us to define $\chi_\rho(\bm{\xi})$ as the expectation value of the fermionic Weyl or displacement operator $D(\bm{\xi})$,
\begin{align}
    \chi_\rho(\bm{\xi})=\tr{\rho D(\bm{\xi})},D(\bm{\xi})=\exp\left[\sum_i (a_i^\dagger \xi_i-\xi_i^*a_i)\right],
\end{align}
where $\bm{\xi}=(\xi_1^*,\xi_1,\ldots,\xi_m^*,\xi_m)$. The quantity $\tr{\rho D(\bm{\xi})}$ is the operator equivalent of a Fourier transform for $\rho$ between ladder operators and Grassmann numbers. Due to fermionic operators and Grassmann numbers anti-commuting, ambiguities in signs can arise. As such, we use an arbitrary yet fixed ordering denoted $\{a_n^\dagger a_n\}=a_n^\dagger a_n-1/2=-\{a_na_n^\dagger\}$ with trivial definitions $\{a_n^\dagger\}=a_n^\dagger$ and $\{a_n\}=a_n$. Additionally, for non-empty ordered subsets $J\subseteq [2m]=\{1,2,\ldots, 2m\}$, we define $\bm{\xi}^J=\prod_{j\in J}\bm{\xi}_j$, $\bm{a}^J=\{\prod_{j\in J}\bm{a}_j\}$, and the ordered left derivatives of indices in $J$ as $\frac{\partial}{\partial \bm{\xi}^J}$. For instance when $J=\{1,2,4\}$, $\bm{\xi}^J = \xi_1^* \xi_1 \xi_2$, $\bm{a}^J=(a_1^\dagger a_1-1/2) a_2$, and $\frac{\partial}{\partial \bm{\xi}^J}=\frac{\partial}{\partial \xi_2}\frac{\partial}{\partial \xi_1}\frac{\partial}{\partial \xi_1^*}$. For an introduction to this formalism, we refer the reader to the work by Cahill and Glauber~\cite{cahill_density_1999}. Taylor expanding $\chi_\rho(\bm{\xi})$ and $\log[\chi_\rho(\bm{\xi})]$ yield the moments $M_J = \frac{\partial}{\partial \bm{\xi}^J}\chi_\rho(\bm{\xi})\big\rvert_{\bm{\xi}=0} = \tr{\rho \bm{a}^J}$ and cumulants $C_J=\frac{\partial}{\partial \bm{\xi}^J}\log[\chi_\rho(\bm{\xi})]\big\rvert_{\bm{\xi}=0}$ of $\rho$ respectively,
\begin{align}\label{eqn:moment_generating_function}
        \chi_\rho (\bm{\xi}) = 1 + \sum_{1\le|J|}M_J \bm{\xi}^J \quad \log [\chi_\rho (\bm{\xi})]=\sum_{1\le |J|}C_J \bm{\xi}^J.
\end{align}
Through this work, we only consider states with an even number of fermions or convex combinations thereof~\footnote{Since fermions carry half-integer spin, a \textit{physical state} can only change by a global phase when rotated $2\pi$ around any axis. This forces pure states to be a superposition of only an odd number ($-1$ global phase) or only an even number ($+1$ global phase) of fermions. We stick to the convention of only even states. Note: physical density matrices can be convex combinations of both even and odd states, but this is outside the scope of this work.}. Therefore, if the cardinality of $J$ (denoted $|J|$) is odd then $M_J=0$ as odd products of ladder operators net in gaining or losing a fermion. Finally, since Grassmann numbers are at most first order, there are finitely many terms in $\log [\chi_\rho(\bm{\xi})]$ ensuring its convergence.

With suitable definitions of convolution and characteristic functions, we are ready to present the main results of this work. A crucial property of classical convolution is the convolution theorem where the Fourier transform of convolution becomes the product of the Fourier transforms. Here, we recover an analogous quantum convolution theorem
\begin{align}\label{eqn:quantum_convolution_theorem}
        \chi_{\rho_A \boxplus_\eta \rho_B}(\bm{\xi})=\chi_{\rho_A}(\sqrt{\eta}\bm{\xi})\chi_{\rho_B}(\sqrt{1-\eta}\bm{\xi}),
\end{align}
for two $m$-mode states $\rho_A$ and $\rho_B$ (see Appendix~\ref{app:thm_fermionic_convolution_theorem}). From the quantum convolution theorem, we can see quantum convolution preserves second-order cumulants and fixes a state if and only if $C_J=0$ for all $|J|>2$ (i.e. Gaussian). Since $\chi_{\rho^{\boxplus n}}(\bm{\xi})=\chi_\rho(\bm{\xi}/\sqrt{n})^n$, we expect $\rho^{\boxplus n}$ to converge to the Gaussian state $\rho_G$ with the same second-order quantities as $\rho$ which is the statement of a fermionic CLT. Indeed, Taylor expanding the cumulant generating function gives 
\begin{align}
    \log[\chi_{\rho^{\boxplus n}}(\bm{\xi})]=\sum_{1\le|J|}n^{1-|J|/2}C_J\bm{\xi}^J
\end{align}
where $n^{1-|J|/2}C_{J}\rightarrow 0$ for $|J|\ge 4$. When convergence is in terms of trace distance, it is efficient in the number of modes with $\norm{\rho^{\boxplus n}-\rho_G}_1 \le \mathcal{O}\left(m^{12}/n\right)$ (Appendix~\ref{app:proof_fermionic_CLT}). This is reasonable to expect given that the decay rate of the first non-Gaussian term $(|J|=4)$ scales as $1/n$. Physically, the convergence $\rho^{\boxplus n}\rightarrow \rho_G$ can be viewed as coupling $n$-copies of $\rho$ together and allowing them to exchange fermions with probability amplitude controlled by $\eta$. On average, second-order interactions will be unchanged as the coupling ($U_\eta$) is quadratic, making magic the non-trivial information being exchanged between copies. Isolating a single system by averaging out its correlations with the other copies (i.e. taking the partial trace) will destroy the magic correlations spread over the copies by giving the information to the environment. Thus the system will dissipate magic and equilibrate in the $n\rightarrow \infty$ limit to a state that only contains second-order interactions forming a Gaussian. Alternatively, we could imagine coupling a single copy of $\rho$ to a Gaussian thermal bath that comprises many copies of $\rho_G$ through the $t$-fold convolution $\rho \boxplus_\eta \rho_G$. By viewing $(\cdot \boxplus_\eta \rho_G)$ as a quantum channel with $\eta = \exp(-2\alpha)$, the corresponding Lindbladian evolution $e^{t \mathcal{L}_{\eta}}[\rho]$ causes a rapid diffusion of magic into the bath with mixing time $t=\mathcal{O}(\log m)$ as $C_J \rightarrow \exp(-|J|\alpha t)C_J$ for $|J|\ge 4$. Classically, this resembles convolving with the heat kernel to find the solution to the diffusion or heat equation. In either mode of convergence, second-moment quantities are preserved, therefore the equilibrium Gaussian should preserve the maximal amount of information about $\rho$'s second-moment quantities. Indeed, $\rho_G$ is the closest Gaussian state to $\rho$ in relative entropy, where $S(\rho||\rho_G)=S(\rho_G)-S(\rho)$ (Appendix~\ref{app:relative_entropy_Gaussian}). If $\rho$ was pure, then $S(\rho||\rho_G)=S(\rho_G)$ which is an efficient measure of non-Gaussian magic that can be computed by measuring all second-moment quantities of $\rho$. One possible approach to doing this would be to use classical shadows for fermionic systems~\cite{wan_matchgate_2023} and the robustness recently proposed by Bittel et al.~\cite{bittel_optimal_2024} to develop an efficient computation method.

As we have demonstrated, Gaussian states are the unique fixed points of convolution. Thus, Gaussians are invariant under time evolution by the generator of convolution $\Lambda = i\sum_i a_i^\dagger b_i + a_ib_i^\dagger$. Taylor expanding the action of $U(t)=\exp(it\Lambda)$ on a pure state $\ket{\psi}$ implies that $\Lambda \ket{\psi}\ket{\psi}=0$ if and only if $\ket{\psi}$ is Gaussian which is known as the matchgate identity~\cite{cai_theory_2007}. It is worth reiterating here that by only considering the fixed points of convolution we obtain the matchgate identity $\Lambda$. Further, since all Gaussian states vanish after applying $\Lambda$, $\norm{\Lambda \ket{\psi}\ket{\psi}}_2$ is another measure of non-Gaussian magic. However, due to $\Lambda$ being quadratic, this measure contains at most the same information as all second-order cumulants and therefore the same information as $S(\rho_G)$ about non-Gaussian magic. Indeed, it can be shown that
\begin{align}
    \norm{\Lambda \ket{\psi}\ket{\psi}}_2^2=\frac{m}{2}-2\sum_{|J|=2}|M_J|^2=\tfrac{m}{2}-\norm{\Sigma}_2^2,
\end{align}
where $(\Sigma)_{ij}=\tr{\{\bm{a}_i\bm{a}_j\}\rho}$ is the covariance matrix for $\rho$ and $\norm{\cdot}_2$ denotes the Schatten $2$-norm (see Appendix~\ref{app:matchgate}). Under a different ordering, $S(\rho_G)$ is related to the binary entropy of each eigenvalue of $\Sigma$ and thus to first order proportional to $\norm{\Sigma}_2^2$~\cite{surace_scipost_2022,bravyi_lagrangian_2004}. However, rather than estimating all second-order cumulants to compute $S(\rho_G)$, we can estimate the second-order change at $t=0$ of the following (see Appendix~\ref{appendix:lemma_second_order_matchgate})
\begin{align}
    \frac{d^2}{dt^2}\norm{(\exp(it\Lambda)-\mathbb{I})\ket{\psi}\ket{\psi}}_2 \Big|_{t=0}=\norm{\Lambda \ket{\psi}\ket{\psi}}_2^2.
\end{align}

Due to the previous two measures of non-Gaussian magic depending only on second-order cumulants, it is natural to want to quantify higher-order $C_J$. One way is to directly measure $C_J$ then compare to Wick's theorem. Since second-order cumulants completely characterize a Gaussian distribution, higher-order moments provide no additional information. This is captured by Wick's theorem~\cite{wick_evaluation_1950} in quantum field theory (or Isserlis' theorem~\cite{isserlis_formula_1918} in statistics) which for Gaussian $\rho$ decomposes higher-order moments into a sum over the product of second-order moments given by $M_J =\mathrm{Pf}[\Sigma\vert_{J}]$
where $\mathrm{Pf}[\cdot]$ is the Pfaffian of the submatrix $\Sigma \vert_J$ given by indices $J$ of the covariance matrix $\Sigma$~\cite{bravyi_lagrangian_2004}. In quantum field theory, Wick's theorem corresponds to the two-point correlation function for particles in $J$ represented by Feynman diagrams. For instance, $J=[4]$ gives
\begin{align}
    M_{\{1,2,3,4\}} = M_{12}M_{34}-M_{13}M_{24}+M_{14}M_{23}
\end{align}
where we note the sign difference due to a crossing of terms (i.e. $1234\mapsto\xi_1^*\xi_1\xi_2^*\xi_2=-\xi_1^*\xi_2\xi_1\xi_2^*$, see Appendix~\ref{app:Wicks_Cumulants}). To compare this with the true cumulants of $\rho$, we Taylor expand $C_J = \frac{\partial }{\partial \bm{\xi}^J}\log [\chi_\rho(\bm{\xi})]\vert_{\bm{\xi}=0}$ to obtain the following,
\begin{align}
    C_J=\sum_{l=1}^{|J|} \frac{(-1)^{l-1}}{l}\sum_{\lambda \vdash_l J}\mathrm{Sgn}(\pi_\lambda)\prod_{i=1}^l M_{\lambda_i}
\end{align}
where $\lambda \vdash_l J$ denotes the ordered partitions $\lambda$ of $J$ with $l$ pieces and $\mathrm{Sgn}(\pi_\lambda)$ is the sign of the permutation reordering $\lambda$ into $J$. This formula resembles that of classical probability theory with the addition of the $\mathrm{Sgn}(\pi_\lambda)$ to account for anti-commutation relations~\cite{peccati_wiener_2011}. We enumerate the first few cumulants below:
\begin{align}
\begin{split}
    C_{[1]} &= M_{[1]} = 0,\\
    C_{[2]} &= M_{12}-M_{1}M_{2} = M_{[2]},\\
    C_{[3]} &= 0,\\
    C_{[4]} &= M_{[4]}-M_{12}M_{34}+M_{13}M_{24}-M_{14}M_{23}.
\end{split}
\end{align}
Notice for $C_{[4]}$ this quantity is zero if the underlying distribution is Gaussian due to Wick's theorem. In quantum field theory, the cumulants are called the connected correlation function and isolate intrinsic $k$-body interactions~\cite{helias_statistical_2020}. For high-order cumulants, the number of terms that must be measured grows rapidly, however, low-order terms or weak interactions can be measured efficiently.

So far, relative entropy and the matchgate identity only depend on second-order cumulants, in contrast, Wick's theorem captures magic in higher-order cumulants but is challenging to compute. An ideal middle ground would capture non-Gaussian magic in a single quantity that can be directly estimated. Since convolution is a quantum channel and $\rho_G$ is the fixed point, by the data processing inequality we have that
\begin{align}
    S(\rho \boxplus \rho || \rho_G)&\le S(\rho ||\rho_G),\\
    S(\rho \boxplus_\eta \rho_G || \rho_G)&\le S(\rho||\rho_G),
\end{align}
where equality holds if and only if $\rho=\rho_G$~\cite{Nielsen_Chuang_2010}. Due to $\rho$ being pure, $S(\rho ||\rho_G)=S(\rho_G)$, thus convolution produces entropy in $\rho\boxplus \rho$ and $\rho \boxplus_\eta \rho_G$ if and only if $\rho$ is non-Gaussian. The additional fact that $S(\rho^{\boxplus n}||\rho_G)\ge S(\rho^{\boxplus}||\rho_G)$ (similarly for $\boxplus_\eta$) also due to the data processing inequality is the statement of the second law of thermodynamics from an information-theoretic perspective. Namely, the system becomes more mixed with each additional convolution if $\rho$ isn't Gaussian. As in the classical case of the state of a Markov chain, each additional convolution can be viewed as a random walk of the fermions. Thus, to measure non-Gaussian magic, we can quantify the overlap of $\rho \boxplus \rho$ or $\rho \boxplus_\eta \rho_G$ with $\rho$ by performing a $\mathrm{SWAP}$ test. The $\mathrm{SWAP}$ operator is given by
\begin{align}
    \mathrm{SWAP}&=\prod_{i}\left(\tfrac{1}{2}\mathbb{I}+a_i^\dagger b_i + b_i^\dagger a_i + 2\{a_i^\dagger a_i\}\{b_i^\dagger b_i\}\right).
\end{align}
Evaluating this operator on a pure state $\rho$ and expanding we see it has the following functional form
\begin{align}
    \tr{\mathrm{SWAP} \rho \ot \rho}=\sum_{J\subseteq [2m]} \delta(J) |M_J|^2 = 1,
\end{align}
where equality to one follows from $\rho$ being pure and $\delta(J)$ tracks factors and signs from anti-commuting operators (see Appendix~\ref{app:sec_SWAP_test}). The formula for $\delta(J)$ is unimportant as it depends only on $J$ and is thus solely for bookkeeping. Crucially, this expression implies that high-order moments can be related to second-order moments such that useful information about higher-order cumulants can be captured. Utilizing Eq.~\eqref{eqn:moment_generating_function} and Eq.~\eqref{eqn:quantum_convolution_theorem}, yields
\begin{align}
    \tr{\mathrm{SWAP}\rho \ot \rho^{\boxplus 2}}= \sum_{J\subseteq [2m]}\delta(J)2^{1-|J|/2}|M_J|^2
\end{align}
for the $\mathrm{SWAP}$ test on $\rho$ and $\rho^{\boxplus 2}$. Therefore $|1-\tr{\mathrm{SWAP}\rho \ot \rho^{\boxplus 2}}|$ is a measure of non-Gaussian magic that captures information about high-order cumulants while being efficient to estimate. 

In this work, we found an appropriate notion of fermionic convolution inspired by a geometric interpretation of classical convolution. We then demonstrated that three instances of the Gaussification of $\rho$ coincide most notably through a fermionic central limit theorem. While the convergence in trace distance achieved the classical dependence of $\tfrac{1}{n}$, an open question is to find the optimal dependence on $m$ in both trace distance and relative entropy. With this work, it is clear that the dependence is at most polynomial in $m$. Motivated by three instances we proposed several efficient measures of non-Gaussian magic. This included the entropy of $\rho_G$, the violation of the matchgate identity $\Lambda$, the violation of Wick's theorem, and the SWAP test. These different measures were then compared in terms of their ease of estimation and the information they capture about the cumulants of $\rho$. An open question is the behavior of these measures in many-body systems, particularly their ability to diagnose new phases of matter~\cite{bejan_dynamical_2024}. On the other hand, with magic quantifying the potential of states for quantum computation further work can be made on non-Gaussian rank which is of key interest to quantum chemistry~\cite{mcardle_quantum_2020}.

While the presented measures provide insight into the landscape of possible strategies, the development of measures that extract more information about high-order cumulants is of key interest. In particular, estimating quantities of the form $\sum_J |J| |M_J|^2$, known as the influence in Boolean functional analysis~\cite{odonnell_analysis_2014}, may provide useful information about magic. However, based on the discussions of this work, measures of the form $\sum_J |J||C_J|^2$ are more powerful as they isolate intrinsic $k$-body interactions within the system. It may be possible for convolution to provide a polynomial extract of this information to all orders for pure states. The more general problem of estimating measures of the form $\sum_J f(|J|)|C_J|^2$ would encapsulate these ideas and shed light on the utility of fermions for the simulation of complex problems. 

Finally, the technical tools of convolution, characteristic functions, cumulant generating functions, and Grassmann algebras utilized in this work may be of independent interest. These tools find wide utility in physics through phase space methods, however, their application to quantum information theory appears to be underutilized. For instance, it may be possible to extend this work to parafermions and generalize non-Gaussian behavior and matchgates to qudit systems with the help of a generalized Jordan-Wigner transformation~\cite{yao_parafermionization_2021,batista_generalized_2001}. While the theory and formulation of parafermions is largely open and has prove be challenging this direction may prove fruitful for both endeavors. A more concrete application for quantum information theory is the simulation of noisy circuits in contrast to previous methods employing characteristic functions for stabilizer states known as the Pauli path. 

\emph{Acknowledgements---} X.G. acknowledges support from NSF PFC grant No.\ PHYS 2317149.

\emph{Note---} During the preparation of this manuscript the authors became aware of another work studying fermionic convolution, the central limit theorem, and the application to measures of non-Gaussian magic~\cite{lyu_fermionic_2024}. In Appendix~\ref{app:lyu_connection} we outline the connection between the two approaches and definitions of characteristic functions.

\bibliography{main.bib}

\setcounter{section}{0}
\setcounter{proposition}{3}
\setcounter{theorem}{4}
\setcounter{lemma}{1}
\setcounter{corollary}{1}
\setcounter{figure}{0}
\renewcommand{\figurename}{Sup. Fig.}

\onecolumngrid
\appendix
\tableofcontents
\section*{Appendix}
\maketitle

\section{Review of Fermionic Linear Optics and Grassmann Numbers}
Here we provide a brief overview of the essential tools from free fermions or fermionic linear optics introduced by Cahill and Glauber~\cite{cahill_density_1999}. Let $a_j^\dagger$ and $a_j$ denote creation and annihilation operators on the $j$-th mode of a fermionic system comprised of $m$-modes, then we have the following defining relations 
\begin{align}
    \{a_j,a_k^\dagger\}=\delta_{jk}, \quad \{a_j,a_k\}=0, \quad \{a_j^\dagger, a_k^\dagger \}=0,\quad a_j\ket{0}=0, \quad \bm{a}=(a_1^\dagger,a_1,\ldots,a_m^\dagger,a_m)
\end{align}
where $\ket{0}$ is the vacuum state. Due to the anti-commutativity of fermionic operators, we define a specific ordering of raising and lowering operators parameterized by $s$ in addition to the $s$ order product of ladder operators for nonempty $J\subseteq [2m]=\{1,2,\ldots,2m\}$,
\begin{align}
    \{a_j^\dagger a_j\}_s=a_j^\dagger a_j + \tfrac{1}{2}(s-1), \quad \{a_j^\dagger\}_s = a_j^\dagger, \quad \{a_j\}_s=a_j, \quad \bm{a}^J_s = \Big\{\prod_{j\in J}\bm{a}_j\Big\}_s.
\end{align}
Here $\bm{a}^J_s$ is the product of ladder operators with $s$ ordering, for instance, $J=\{1,3,4\}$ gives $\bm{a}^J_s = \{a_1\}_s\{a_2^\dagger a_2\}_s=a_1 (a^\dagger_2 a_2 +\tfrac{1}{2}(s-1))$. Throughout this work, we will often focus on the $s=0$ or the symmetric ordering in which case we will drop the $s$ subscript. Similarly, we define the set of anti-commuting Grassmann numbers $\xi_j$ and their complex conjugates $\xi^*_j$ by the relations and additional definitions,
\begin{align}
    \{\xi_j,\xi_k^*\}=0,\quad \{\xi_j,\xi_k\}=0,\quad \{\xi_j^*,\xi_k^*\}=0, \quad \{\xi_j, a_k\}=0, \quad \bm{\xi}=(\xi_1^*,\xi_1,\ldots, \xi_m^*,\xi_m), \quad \bm{\xi}^J=\prod_{j\in J}\bm{\xi}_j.
\end{align}
These numbers naturally capture the characteristics of fermions and are quantized by fermionic operators. Since the square of a Grassmann number is zero, the most general function $f(\xi)$ of a single anticommuting variable is $f(\xi)=u+\xi t$. The left derivative of $f$ is $\tfrac{\partial f}{\partial \xi}=t$ where if $t$ was anti-commuting and $f=u+t\xi$ then $\tfrac{\partial f}{\partial \xi}=-t$ due to first anti-commuting $\xi$ and $t$ then taking the left derivative. For a given non-empty subset $J\subset [2m]$ we define the derivative with respect to the $J$ Grassmann numbers as $\tfrac{\partial}{\partial \bm{\xi}^J}$. For instance $J=\{1,2,3\}$ gives $\tfrac{\partial}{\partial \bm{\xi}^J}=\tfrac{\partial}{\partial \xi_2}\tfrac{\partial}{\partial \xi_1}\tfrac{\partial}{\partial \xi_1^*}$. Integration over Grassmann numbers is defined as follows
\begin{align}
    \int d \alpha_j = \int d\alpha_j^* = 0 \quad \int d\alpha_j \alpha_k = \delta_{jk} \quad \int d \alpha_j^* \alpha_k^* = \delta_{jk}.
\end{align}
With this definition, left differentiation and integration with respect to Grassmann numbers are equivalent. We will often be integrating all Grassmann numbers and their complex conjugates and will utilize the following compact notation,
\begin{align}
    \int d^2 \alpha_j = \int d\alpha_j^* d\alpha_j \quad d\alpha_j d\alpha_j^* = -d\alpha_j^* d\alpha_j \quad \int d^2 \bm{\alpha} = \int \prod_j d^2 \alpha_j.
\end{align}
In analogy to the free bosons or Gaussian quantum optics, we define the $s$-ordered unitary displacement or Weyl operator $D(\bm{\xi},s)=\{D(\bm{\xi})\}_s$ and coherent state by,
\begin{align}
    D(\bm{\xi},s)=\exp\left[\sum_j (a_j^\dagger \xi_j-\xi_j^*a_j+\tfrac{s}{2}\xi_j^*\xi_j)\right], \quad D^\dagger(\bm{\xi},s)\bm{a}D(\bm{\xi},s)=\bm{a}+\bm{\xi}, \quad \ket{\bm{\xi}}_s=D(\bm{\xi},s)\ket{0}.
\end{align}
Crucially the displacement operators form a basis for the space of operators on our Hilbert space, however, they are not orthogonal in the Hilbert-Schmidt inner product. The orthogonal operators are defined by $\delta(\bm{\gamma}-\bm{\xi})=\tr{D(\bm{\gamma},s)E(\bm{\xi},s)^\dagger}$ where
\begin{align}
    E(\bm{\xi},s)=\int d^2 \bm{\alpha} \exp\left[\sum_j (\xi_j\alpha_j^*-\alpha_j\xi_j^*+\tfrac{s+1}{2}\xi_j^*\xi_j)\right]\ketbra{\bm{\alpha}}{-\bm{\alpha}},
\end{align}
which is the Fourier transform of $\ketbra{\bm{\alpha}}{-\bm{\alpha}}$ with $s$-ordering. Since the displacement operators are complete so are the $E(\bm{\xi},s)$. Directly computing $E(\xi,s)$ gives
\begin{align}
    E(\xi,s)=2(\tfrac{1}{2}-a^\dagger a)+(s+1)\xi^*\xi(\tfrac{1}{2}-a^\dagger a)-\xi^*\xi aa^\dagger +\xi a^\dagger -\xi^* a,
\end{align}
which allows us to express the ladder operators as follows,
\begin{align}\label{app:eqn_ladder_E_operator}
    a = \int d^2 \xi (-\xi)E(\xi,s) \quad a^\dagger  = \int d^2 \xi (-\xi^*)E(\xi,s) \quad \{a^\dagger a\}_0=\int d^2 \xi (-\tfrac{1}{2}\xi \xi^*) E(\xi,s).
\end{align}
As in the bosonic case, we can express any operator such as a density matrix $\rho$ in terms of its characteristic function as,
\begin{align}
    \rho = \int d^2 \bm{\xi} \chi_\rho (\bm{\xi},s)E(-\bm{\xi},-s), \quad \chi_\rho(\bm{\xi},s)=\tr{\rho D(\bm{\xi},s)},
\end{align}
which is the quantum analog of the Fourier inversion formula. Taylor expanding the characteristic function or moment-generating function and $\log[\chi_\rho(\xi,s)]$ known as the cumulant generating function give
\begin{align}\label{app:eqn_moment_generating_function}
        \chi_\rho (\bm{\xi},s) = 1 + \sum_{1\le|J|}M_J \bm{\xi}^J \quad \log [\chi_\rho (\bm{\xi},s)]=\sum_{1\le |J|}C_J \bm{\xi}^J.
\end{align}
where $M_J = \frac{\partial}{\partial \bm{\xi}^J}\chi_\rho(\bm{\xi},s)\big\rvert_{\bm{\xi}=0} = \tr{\rho \bm{a}^J_s}$ are the moments of $\rho$ and the $C_J$ are the cumulants or connected correlation function of $\rho$. We will assume that $\rho$ is pure and even, that is when the cardinality of $J$ (denoted $|J|$) is odd then $M_J=C_J=0$.

\section{Correspondence of Three Notions of Closest Gaussian}
In classical probability, there are many equivalent notions of what it means to be Gaussian. Here we focus on the following three. First, the probability density is completely described by first and second-order cumulants (mean and covariance matrix). Second, the probability density is invariant under convolution with itself (by the uniqueness of the central limit theorem or CLT). Third, the probability density achieves the maximum differential entropy for all distributions with the same first and second-order cumulants. Thus for a general probability density $p$, we can define the closest Gaussian distribution to it $p_G$ as either: 1. that given by the first and second-order cumulants of $p$, 2. the one obtained by convolving $p$ with itself many times, or 3. the closest Gaussian to $p$ in differential entropy without ambiguity. In quantum mechanics, we replace $p$ with $\rho$ and $p_G$ with $\rho_G$; however, a priori, we do not know if these three notions of closest Gaussian again coincide. In this section, we define a fermionic quantum convolution and demonstrate that these three notions of closest Gaussian do coincide. To do so, we prove that $(1)\Longleftrightarrow(2)$ through a fermionic CLT and $(1)\Longleftrightarrow(3)$ through direct calculation.
\subsection{Fermionic Convolution and Quantum Convolution Theorem}
Classically, convolving the probability densities of independent random variables $A$ and $B$ corresponds to rotating the joint density for $(A,B)$ by $\pi/4$ clockwise to $\tfrac{1}{\sqrt{2}}(A+B,A-B)$ then compute the marginal density matrix for the first variable. In quantum mechanics, this corresponds to a rotation with real coefficients, i.e. in the XZ plane, of the joint density matrix $\rho_A\ot \rho_B$ clockwise by $\pi/4$ and computing the marginal density matrix for the first system. This unitary rotation corresponds to a beam splitter operation in fermionic systems which we now derive.
\begin{lemma}[Fermionic Beam Splitter Eq.~\eqref{fermionic_beam_splitter}]\label{app:lemma_fermionic_beam_splitter}
    The unitary operator $U_\eta$ for $\eta\in [0,1]$ that transforms
    \begin{align}
        U_\eta \bm{a}U_\eta^\dagger = \sqrt{\eta}\bm{a}+\sqrt{1-\eta}\bm{b},\\
        U_\eta \bm{b} U_\eta^\dagger = \sqrt{\eta}\bm{b}-\sqrt{1-\eta}\bm{a}
    \end{align}
    is given by
    \begin{align}
        U_\eta = \exp \left(-\arccos(\sqrt{\eta})\sum_{j=1}^m (a_jb_j^\dagger + a_j^\dagger b_j)\right).
    \end{align}
\end{lemma}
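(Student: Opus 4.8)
The plan is to verify the claimed Heisenberg-picture action by differentiating a one-parameter family of unitaries and solving the resulting linear ODE. Write $\theta = \arccos(\sqrt{\eta})$ and define the Hermitian generator $G = \sum_{j=1}^m (a_j b_j^\dagger + a_j^\dagger b_j)$, so that $U_\eta = \exp(-\theta G)$. (Note $G$ is anti-Hermitian-like only up to the $i$ convention used elsewhere; here $G^\dagger = G$, and $U_\eta$ as written is unitary because... — actually one should be slightly careful: $\exp(-\theta G)$ with $G$ Hermitian is Hermitian, not unitary, so the intended reading is that $-\theta G$ plays the role of the anti-Hermitian generator, i.e. one checks directly that $U_\eta U_\eta^\dagger = I$ using the commutation structure below, or equivalently absorbs an $i$; I will treat $U_\eta$ formally as the flow generated by $G$ and confirm unitarity from the BCH structure.) Introduce $U(t) = \exp(-tG)$ and $a_j(t) = U(t) a_j U(t)^\dagger$, $b_j(t) = U(t) b_j U(t)^\dagger$. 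The key computation is the single commutator: using $\{a_j, a_k^\dagger\} = \{b_j, b_k^\dagger\} = \delta_{jk}$, $\{a_j, b_k\} = \{a_j, b_k^\dagger\} = 0$, one finds $[G, a_j] = -b_j$ and $[G, b_j] = a_j$ (up to an overall sign that I will fix by direct expansion of $[a_k b_k^\dagger + a_k^\dagger b_k,\, a_j]$ and the analogous bracket for $b_j$). Because these brackets close on the two-dimensional span $\{a_j, b_j\}$ for each mode independently, the Heisenberg equations $\dot a_j(t) = -[G, a_j(t)]$, $\dot b_j(t) = -[G, b_j(t)]$ become the linear system $\dot a_j = b_j$, $\dot b_j = -a_j$, whose solution with $a_j(0)=a_j$, $b_j(0)=b_j$ is $a_j(t) = \cos(t) a_j + \sin(t) b_j$ and $b_j(t) = \cos(t) b_j - \sin(t) a_j$.

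Next I would substitute $t = \theta = \arccos(\sqrt{\eta})$, so $\cos\theta = \sqrt{\eta}$ and $\sin\theta = \sqrt{1-\eta}$ (valid for $\eta \in [0,1]$), which yields exactly $U_\eta a_j U_\eta^\dagger = \sqrt{\eta}\, a_j + \sqrt{1-\eta}\, b_j$ and $U_\eta b_j U_\eta^\dagger = \sqrt{\eta}\, b_j - \sqrt{1-\eta}\, a_j$. Since $\bm a = (a_1^\dagger, a_1, \ldots)$ is just the vector assembled from these (the creation-operator components transform by taking daggers of the above, and the relations are invariant in form), this establishes the displayed transformation for the full vectors $\bm a, \bm b$. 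I would also check unitarity: since the map sending $(a_j, b_j) \mapsto (\sqrt\eta a_j + \sqrt{1-\eta} b_j,\, \sqrt\eta b_j - \sqrt{1-\eta} a_j)$ is an orthogonal (rotation) mixing of the modes, it preserves all anticommutation relations, hence $U_\eta$ implements a Bogoliubov automorphism and is unitary on Fock space; alternatively, grouping $\exp(-\theta G) = \prod_j \exp(-\theta(a_j b_j^\dagger + a_j^\dagger b_j))$ over commuting single-mode-pair blocks and diagonalizing each $2\times 2$ block confirms $U_\eta^\dagger U_\eta = I$.

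For uniqueness (the phrase "is given by"): any two unitaries inducing the same automorphism on the full operator algebra differ by a global phase, and one can pin the phase by noting $U_\eta$ as constructed acts as the identity when $\eta = 1$ (since then $\theta = 0$), matching the natural normalization; I would state this briefly rather than belabor it.

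The main obstacle is bookkeeping rather than conceptual: getting the signs in $[G, a_j] = -b_j$ versus $+b_j$ exactly right (and correspondingly whether the phase-space rotation is clockwise or counterclockwise), because the anticommuting algebra makes sign errors easy, and because the paper writes $U_\eta = \exp(-\theta G)$ with a Hermitian-looking $G$ rather than the $i$-dressed anti-Hermitian form used for $\Lambda$ later — so one must either insert the $i$ consistently or verify unitarity by the Bogoliubov argument above. I expect the cleanest route is to block-diagonalize: for a fixed $j$, restrict to the four-dimensional Fock space of modes $a_j, b_j$, write $G_j = a_j b_j^\dagger + a_j^\dagger b_j$ explicitly as a $4\times 4$ matrix (it has eigenvalues $0, 0, +1, -1$ on the even/odd sectors), exponentiate, and read off the action on $a_j, b_j$; then take the tensor product over $j$. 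This sidesteps all subtlety about whether $G$ generates a unitary and makes the $\arccos$ normalization transparent.
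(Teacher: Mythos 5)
Your core argument is essentially the paper's: the paper also works mode by mode in the Heisenberg picture, obtains the equations of motion $\dot a_j = b_j$, $\dot b_j = -a_j$ for the quadratic hopping generator, solves them as a rotation by angle $t$, and sets $\cos t=\sqrt{\eta}$; you simply run the computation in the verification direction (from the stated generator to the rotation) rather than solving for the generator from the desired rotation, and your commutators $[G,a_j]=-b_j$, $[G,b_j]=a_j$ and the resulting ODE are correct. The one point to repair is your aside about Hermiticity: for fermions $G=\sum_j(a_jb_j^\dagger+a_j^\dagger b_j)$ is \emph{anti}-Hermitian, not Hermitian, because operators on distinct modes anticommute, so $(a_jb_j^\dagger)^\dagger=b_ja_j^\dagger=-a_j^\dagger b_j$ and hence $G^\dagger=-G$; consequently $U_\eta=\exp(-\theta G)$ is manifestly unitary with no missing factor of $i$, and this is exactly consistent with the paper's later use of the Hermitian operator $\Lambda=iG$ as the Hamiltonian. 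With that observed, your unitarity worry evaporates; note also that your shortcut ``the induced map preserves the anticommutation relations, hence $U_\eta$ is unitary'' is not by itself a proof (a conjugation preserving the CAR does not establish unitarity of the implementing operator unless you already know it is invertible and appropriately normalized), although your single-block diagonalization fallback would indeed close that gap if it were needed.
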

\begin{proof}
For simplicity, we focus on the case where $\rho_A$ and $\rho_B$ are single-mode systems as the $m$-mode case will be the $m$-fold tensor product of the single mode case. Since we are requiring the rotation between the modes to be real the evolution after time $t$ is as follows
\begin{align}
\begin{split}
    a(t)&=\cos(t)a+\sin(t)b=e^{it \Lambda}ae^{-it\Lambda}\\
    b(t)&=-\sin(t)a+\cos(t)b=e^{it \Lambda}be^{-it\Lambda}
\end{split}
\end{align}
where $\Lambda$ is the Hamiltonian generating this rotation. Solving for the Heisenberg evolution of the operators we have that,
\begin{align}
\begin{split}
    \frac{d}{dt}a(t)&=i[\Lambda,a]=b(t) \implies i[\Lambda,a]=b\\
    \frac{d}{dt}b(t)&=i[\Lambda,b]=-a(t) \implies i[\Lambda,b]=-a.
\end{split}
\end{align}
Solving for $\Lambda$ we find that $\Lambda = i(ab^\dagger + a^\dagger b)$. Therefore our unitary transformation is
\begin{align}
    U=\exp\left[it\Lambda \right] = \exp\left[-t(ab^\dagger + a^\dagger b)\right],
\end{align}
where setting $\cos(t)=\sqrt{\eta}$ gives us our desired result.
\end{proof}

\begin{theorem}[Convolution of Fermionic Characteristic Functions~\eqref{eqn:quantum_convolution_theorem}]\label{app:thm_fermionic_convolution_theorem}
    Given two quantum states $\rho$ and $\sigma$, the characteristic function of their convolution is,
    \begin{align}
        \chi_{\rho \boxplus_\eta \sigma}(\bm{\xi},s)=\chi_{\rho}(\sqrt{\eta}\bm{\xi},s)\chi_\sigma(\sqrt{1-\eta}\bm{\xi},s).
    \end{align}
\end{theorem}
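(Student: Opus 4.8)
The plan is to reduce the statement to the Heisenberg action of the beam splitter on the displacement operator and then factor both the resulting exponential and the trace. Starting from $\chi_{\rho\boxplus_\eta\sigma}(\bm{\xi},s)=\tr{(\rho\boxplus_\eta\sigma)D(\bm{\xi},s)}$, the definition $\rho\boxplus_\eta\sigma=\mathrm{tr}_B[U_\eta(\rho\ot\sigma)U_\eta^\dagger]$, the partial-trace identity $\mathrm{tr}\big[\mathrm{tr}_B[X]\,Y\big]=\tr{X(Y\ot I_B)}$, and cyclicity of the (ordinary, Grassmann-even) trace, I would first write
\[
  \chi_{\rho\boxplus_\eta\sigma}(\bm{\xi},s)=\tr{(\rho\ot\sigma)\;U_\eta^\dagger\big(D(\bm{\xi},s)\ot I_B\big)U_\eta}.
\]
Everything then hinges on evaluating $U_\eta^\dagger\big(D(\bm{\xi},s)\ot I_B\big)U_\eta$.

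For that step I would conjugate the exponent of $D(\bm{\xi},s)$ mode by mode. Because $U_\eta$ is generated by a quadratic, hence Grassmann-even, fermionic operator, it commutes with each $\xi_j$ and $\xi_j^*$, so the conjugation acts only on $a_j^\dagger,a_j$; by Lemma~\ref{app:lemma_fermionic_beam_splitter}, after inverting the stated relation, $U_\eta^\dagger a_j U_\eta=\sqrt{\eta}\,a_j-\sqrt{1-\eta}\,b_j$ and likewise for $a_j^\dagger$. Substituting, and splitting the $s$-term as $\tfrac s2\xi_j^*\xi_j=\tfrac s2\eta\,\xi_j^*\xi_j+\tfrac s2(1-\eta)\,\xi_j^*\xi_j$, turns the conjugated exponent into $G_A+G_B$, where $G_A$ is exactly the exponent of $D(\sqrt{\eta}\,\bm{\xi},s)$ on $A$ and $G_B$ the exponent of $D(-\sqrt{1-\eta}\,\bm{\xi},s)$ on $B$. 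Each monomial of $G_A$ and of $G_B$ carries either one ladder operator together with one Grassmann generator, or two Grassmann generators, so both $G_A$ and $G_B$ are Grassmann-even and therefore commute; hence $e^{G_A+G_B}=e^{G_A}e^{G_B}$ and
\[
  U_\eta^\dagger\big(D(\bm{\xi},s)\ot I_B\big)U_\eta=D(\sqrt{\eta}\,\bm{\xi},s)\ot D(-\sqrt{1-\eta}\,\bm{\xi},s).
\]

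Plugging this back in, I would factor the trace: since $D(\sqrt{\eta}\,\bm{\xi},s)$ and $D(-\sqrt{1-\eta}\,\bm{\xi},s)$ are Grassmann-even operators supported on $A$ and $B$ respectively and $\rho,\sigma$ are even states, $\tr{(\rho\ot\sigma)(O_A\ot O_B)}=\tr{\rho\,O_A}\,\tr{\sigma\,O_B}$, giving $\chi_\rho(\sqrt{\eta}\,\bm{\xi},s)\,\chi_\sigma(-\sqrt{1-\eta}\,\bm{\xi},s)$. The leftover sign is cosmetic: $\sigma$ is even, so $\chi_\sigma(\bm{\zeta},s)=1+\sum_{|J|\ge2}M_J\bm{\zeta}^J$ contains only monomials $\bm{\zeta}^J$ with $|J|$ even, whence $\chi_\sigma(-\bm{\zeta},s)=\chi_\sigma(\bm{\zeta},s)$ and the claimed identity follows (the general-$s$ case is handled uniformly since the $\tfrac s2\xi^*\xi$ piece splits as $\eta+(1-\eta)$).

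The step I expect to be the main obstacle is the sign and grading bookkeeping. Making the factorization $e^{G_A+G_B}=e^{G_A}e^{G_B}$ rigorous requires that $G_A$ and $G_B$ genuinely commute even though the $A$- and $B$-ladder operators anticommute and the Grassmann generators anticommute among themselves — this is precisely where the Grassmann-even parity of each piece must be invoked, and the same parity property is what lets the final trace factor and makes the residual minus sign vanish. Tracking the ordering convention $\{\cdot\}_s$ and the placement of $\xi_j^*$ versus $\xi_j$ consistently throughout is where most of the care is needed.
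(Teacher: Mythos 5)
Your proof is correct, but it takes a genuinely different route from the paper's. You work in the Heisenberg picture: using $\mathrm{tr}\bigl[\mathrm{tr}_B[X]\,Y\bigr]=\mathrm{tr}\bigl[X(Y\ot I_B)\bigr]$ you push $U_\eta$ onto the displacement operator, conjugate its exponent mode by mode via the inverse of Lemma~\ref{app:lemma_fermionic_beam_splitter}, split it as $G_A+G_B$ (with the $\tfrac{s}{2}\xi^*\xi$ piece distributed as $\eta+(1-\eta)$), factor $e^{G_A+G_B}=e^{G_A}e^{G_B}$ by Grassmann-evenness, and then factor the trace, absorbing the residual sign $\chi_\sigma(-\sqrt{1-\eta}\,\bm{\xi},s)=\chi_\sigma(\sqrt{1-\eta}\,\bm{\xi},s)$ by evenness of $\sigma$ — all of these parity arguments are sound (in particular, cross terms odd in the $A$- or $B$-fermion parity have vanishing trace, so the factorization of $\mathrm{tr}\bigl[(\rho\ot\sigma)(O_A\ot O_B)\bigr]$ holds). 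The paper instead works in the Schrödinger picture: it expands $\rho$ and $\sigma$ in the dual basis $E(-\bm{\xi},-s)$ via the Fourier inversion formula, conjugates the coherent-state dyadics under $U_\eta$ (again via Lemma~\ref{app:lemma_fermionic_beam_splitter}), performs an orthogonal change of Grassmann (Berezin) integration variables with unit Jacobian, and extracts the result from the delta functions $\mathrm{tr}[E(-\bm{\gamma}',-s)]$ and $\mathrm{tr}[D(-\bm{\gamma},s)E(-\bm{\xi}/\sqrt{\eta},-s)]$. Your argument is more elementary and avoids the $E$-operator machinery and Grassmann integration entirely, at the cost of having to justify the grading/parity bookkeeping explicitly (which you do); the paper's route is heavier but sets up the $E$-operator/inversion formalism that it reuses later, e.g.\ in the proof of the CLT convergence rate, where the state (not just its characteristic function) must be reconstructed from $\chi_\rho$.
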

\begin{proof}
    We begin with the representation of $\rho$ and $\sigma$ in terms of their characteristic functions:
    \begin{align}
        \rho = \int d^2\bm{\xi}\chi_\rho(\bm{\xi},s)E(-\bm{\xi},-s), \quad \sigma = \int d^2 \bm{\gamma} \chi_\sigma(\bm{\gamma},l)E(-\bm{\gamma},-s).
    \end{align}
    To compute $U_\eta(\rho \otimes \sigma)U_\eta^\dagger$, we simplify our focus to $U_\eta[E(-\bm{\xi},-s)\otimes E(-\bm{\gamma},-s)]U_\eta^\dagger$ due to $U_\eta$ commuting with Grassmann numbers. This gives us the following,
    \begin{align}
    \begin{split}
        &U_\eta[E(-\bm{\xi},-s)\otimes E(-\bm{\gamma},-s)]U_\eta^\dagger=\\ 
        &\int d^2\bm{\alpha}d^2\bm{\beta}\exp\left(\sum_j -\xi_j\alpha_j^*+\alpha_j\xi_j^*-\gamma_j\beta_j^*+\beta_j\gamma_j^* + \tfrac{-s+1}{2}(\xi_j^*\xi_j + \gamma_j^*\gamma_j)\right)U_\eta\ketbra{\bm{\alpha}\bm{\beta}}{-\bm{\alpha}\bm{\beta}}U_\eta^\dagger.
    \end{split}
    \end{align}
    Now we compute $U_\eta\ketbra{\bm{\alpha}\bm{\beta}}{\bm{\alpha}\bm{\beta}}U_\eta^\dagger$ by using the fact that $U_\eta$ is unitary and that $U_\eta \ket{0}=\ket{0}$ yielding us,
    \begin{align}
        U_\eta\ketbra{\bm{\alpha}\bm{\beta}}{-\bm{\alpha}\bm{\beta}}U_\eta^\dagger &= U_\eta [D(\bm{\alpha})\otimes D(\bm{\beta})]\underbrace{U_\eta^\dagger U_\eta}_{\mathbb{I}}\ketbra{\bm{0}\bm{0}}{\bm{0}\bm{0}} \underbrace{U_\eta^\dagger U_\eta}_{\mathbb{I}} [D^\dagger(\bm{-\alpha})\otimes D^\dagger(\bm{-\beta})]U_\eta^\dagger,\\
        &=U_\eta [D(\bm{\alpha})\otimes D(\bm{\beta})]U_\eta^\dagger \ketbra{\bm{0}\bm{0}}{\bm{0}\bm{0}} U_\eta [D^\dagger(\bm{-\alpha})\otimes D^\dagger(\bm{-\beta})]U_\eta^\dagger.
    \end{align}
    We again shift our focus to computing $U_\eta [D(\bm{\alpha})\otimes D(\bm{\beta})]U_\eta^\dagger$, here we can employ Lemma~\ref{app:lemma_fermionic_beam_splitter} and the fact that $Ue^XU^\dagger=e^{UXU^\dagger}$ for all $X$ and unitary $U$ to see the following,
    \begin{align}
    \begin{split}
        U_\eta\left[a_{j,1}^\dagger\alpha_j-\alpha_j^*a_{j,1}+a_{j,2}^\dagger \beta_j -\beta_j^*a_{j,2}\right]U_\eta^\dagger&=a_{j,1}^\dagger(\sqrt{\eta}\alpha_j+\sqrt{1-\eta}\beta_j)-(\sqrt{\eta}\alpha_j^*+\sqrt{1-\eta}\beta_j^*)a_{j,1}\\
        &+a_{j,2}^\dagger(-\sqrt{1-\eta}\alpha_j+\sqrt{\eta}\beta_j)-(-\sqrt{1-\eta}\alpha_j^*+\sqrt{\eta}\beta_j^*)a_{j,2}.
    \end{split}
    \end{align}
    Thus, $U_\eta [D(\bm{\alpha})\otimes D(\bm{\beta})]U_\eta^\dagger = D(\sqrt{\eta}\bm{\alpha}+\sqrt{1-\eta}\bm{\beta})\otimes D(-\sqrt{1-\eta}\bm{\alpha}+\sqrt{\eta}\bm{\beta})$. Define $\bm{v}=\sqrt{\eta}\bm{\alpha}+\sqrt{1-\eta}\bm{\beta}$ and $\bm{w}=-\sqrt{1-\eta}\bm{\alpha}+\sqrt{\eta}\bm{\beta}$. Further, let $B$ be the orthogonal transformation, i.e. $B^T B=\mathbb{I}_2$, such that $B\alpha_j=v_j$ and $B\beta_j=w_j$. Then to simplify the exponential term of our original integral, observe that the terms are the result of a dot product which is invariant under orthogonal transforms that is,
    \begin{align}
        \sum_j \left(-\xi_j\alpha_j^* + \alpha_j\xi_j^*-\gamma_j \beta_j^*+\beta_j\gamma_j^*+\tfrac{-s+1}{2}(\xi_j^*\xi_j + \gamma_j^*\gamma_j)\right)=\sum_j(-\xi_j' v_j^* + v_j{\xi_j'}^*-\gamma_j'w_j^* + w_j {\gamma_j'}^*+\tfrac{-s+1}{2}({\xi_j'}^*\xi_j'+{\gamma_j'}^*\gamma_j'))
    \end{align}
    where $-\xi'_j=-B\xi_j=-(\sqrt{\eta}\xi_j+\sqrt{1-\eta}\gamma_j)$ and $-\gamma'_j=-B\gamma_j=-(\sqrt{1-\eta}\xi_j+\sqrt{\eta}\gamma_j)$. Additionally, since $\det(B)=1$ and Grassmann integration or Berezin integration multiples by the determinant of the Jacobian, this transformation preserves normalization. Thus the beam splitter operator has the total net effect,
    \begin{align}
        U_\eta [E(-\bm{\xi},-s)\ot E(-\bm{\gamma},-s)]U_\eta^\dagger = E[-(\sqrt{\eta}\bm{\xi}+\sqrt{1-\eta}\bm{\gamma}),-s]\ot E[-(\sqrt{1-\eta}\bm{\xi}+\sqrt{\eta}\bm{\gamma}),-s]
    \end{align}
    leading us to the overall action of the beam splitter to be,
    \begin{align}
        U_\eta(\rho \ot \sigma)U_\eta^\dagger = \int d^2\bm{\xi}d^2\bm{\gamma} \chi_\rho(\bm{\xi},s)\chi_\sigma(\bm{\gamma},s) E(-\bm{\xi'},-s)\ot E(-\bm{\gamma'},-s).
    \end{align}
    Thus, by taking the partial trace of the second system we are left with $\rho \boxplus_\eta \sigma$,
    \begin{align}
        \rho \boxplus_\eta \sigma &= \int d^2\bm{\xi}d^2\bm{\gamma} \chi_\rho (\bm{\xi},s)\chi_\sigma(\bm{\gamma},s)E(-\bm{{\xi'}},-s) \underbrace{\tr{E(-\bm{\gamma'},-s)}}_{\delta(\bm{\gamma'})},\\
        &= \int d^2\bm{\xi}\chi_\rho(\bm{\xi})\chi_\sigma(\sqrt{1-\eta}\bm{\xi}/\sqrt{\eta},s)E(-\bm{\xi}/\sqrt{\eta},-s),
    \end{align}
    with the following characteristic equation,
    \begin{align}
        \chi_{\rho \boxplus_\eta \sigma}(\bm{\gamma},s)=\int d^2\bm{\xi}\chi_\rho(\bm{\xi})\chi_\sigma(\sqrt{1-\eta}\bm{\xi}/\sqrt{\eta},s)\underbrace{\tr{D(-\bm{\gamma},s)E(-\bm{\xi}/\sqrt{\eta},-s)}}_{\delta(\bm{\gamma}-\bm{\xi}/\sqrt{\eta})}= \chi_\rho(\sqrt{\eta}\bm{\xi},s)\chi_\sigma(\sqrt{1-\eta}\bm{\xi},s),
    \end{align}
    giving us our desired result.
\end{proof}
\subsection{Fermionic Central Limit Theorem (CLT)}
For the remainder of these proofs we will assume that $s=0$ for simplicity. To find the convergence rate of the fermionic CLT in trace distance we will need the following lemma:
\begin{lemma}[Cumulant Bound]\label{app:lemma_cumulant_bound}
    For a given nonempty subset $K\subseteq[2m]$ with cardinality $|K|=k$ we have the following bound on the corresponding cumulant
    \begin{align}
        |C_K| \le k^{2k}.
    \end{align}
\end{lemma}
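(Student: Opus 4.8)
The plan is to reduce the lemma to two elementary facts: a uniform bound $|M_J|\le 1$ on all fermionic moments, and a crude count of the number of terms that appear when a cumulant is expanded in moments. For the first, recall $M_J=\tr{\rho\,\bm{a}^J_0}$, where $\bm{a}^J_0$ is, mode by mode, a product of factors, with mode $i$ contributing $\mathbb{I}$, $a_i^\dagger$, $a_i$, or $\{a_i^\dagger a_i\}_0=a_i^\dagger a_i-\tfrac12$ according to which of the two indices $2i-1,2i$ lie in $J$. Each such factor has operator norm at most $1$: the first two because $a_ia_i^\dagger+a_i^\dagger a_i=\mathbb{I}$, and the third because $a_i^\dagger a_i$ has spectrum $\{0,1\}$, so $a_i^\dagger a_i-\tfrac12$ has spectrum $\{\pm\tfrac12\}$. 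Submultiplicativity of the operator norm then gives $\norm{\bm{a}^J_0}_\infty\le 1$, and H\"older's inequality for Schatten norms gives $|M_J|=|\tr{\rho\,\bm{a}^J_0}|\le\norm{\rho}_1\norm{\bm{a}^J_0}_\infty\le 1$ for every density matrix $\rho$ and every $J$.

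Next I would substitute this into the explicit cumulant formula $C_K=\sum_{l=1}^{k}\tfrac{(-1)^{l-1}}{l}\sum_{\lambda\vdash_l K}\mathrm{Sgn}(\pi_\lambda)\prod_{i=1}^{l}M_{\lambda_i}$. Since $|M_{\lambda_i}|\le 1$, $|\mathrm{Sgn}(\pi_\lambda)|=1$, and $|(-1)^{l-1}/l|\le 1$, the quantity $|C_K|$ is bounded by the number of pairs $(l,\lambda)$ with $1\le l\le k$ and $\lambda$ an ordered partition of $K$ into $l$ nonempty blocks. For fixed $l$, such an ordered partition is the same as a surjection $K\to\{1,\dots,l\}$, so there are at most $l^{k}\le k^{k}$ of them; summing over $l=1,\dots,k$ gives $|C_K|\le k\cdot k^{k}=k^{\,k+1}\le k^{2k}$, which is the claim (the case $k=1$ being immediate, since $C_{[1]}=0$).

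I do not expect a genuine obstacle here. The moment bound is valid for an arbitrary density matrix (no purity or evenness is needed), the sign factors $\mathrm{Sgn}(\pi_\lambda)$ never have to be evaluated, and the target $k^{2k}$ is deliberately generous --- one could sharpen $k^{k+1}$ further using the ordered-Bell/Fubini asymptotics, and the weight $1/l$ was discarded for free. The only points worth double-checking are that the moment-to-cumulant identity invoked is precisely the one stated earlier in the text (signs included), and that the ``paired'' creation--annihilation factors produced by the symmetric ordering $\{\cdot\}_0$ are accounted for correctly; since those factors have norm $\tfrac12$, they only strengthen the bound, so they cause no trouble.
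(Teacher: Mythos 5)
Your proof is correct and takes essentially the same route as the paper: expand $C_K$ via the moment-to-cumulant formula, bound every factor by $|M_J|\le 1$, and count the ordered partitions of $K$ by the triangle inequality. The only deviations are improvements — you actually prove $|M_J|\le 1$ via operator norms (the paper merely asserts it), and your surjection count $\le l^k$ per block number replaces the paper's $\tfrac{k!}{k}B_k$ Bell-number estimate, yielding the slightly sharper intermediate bound $k^{k+1}\le k^{2k}$.
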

\begin{proof}
    The goal is to bound the size of the cumulants in terms of the moments of $\rho$. To do so we can Taylor expand $\log[\chi_\rho(\bm{\xi})]$ and use Eq.~\eqref{app:eqn_moment_generating_function} to find
    \begin{align}
        \log[\chi_\rho(\bm{\xi})]=\log\left[1+\sum_{J \subseteq K}M_J \bm{\xi}^J\right] = \sum_{l=1}^k\frac{(-1)^{l-1}}{l}\left(\sum_{J\subseteq K}M_J \bm{\xi}^J\right)^l,
    \end{align}
    where we only keep track of $\xi^K$ terms that could contribute to $C_K$. When expanding the $l$-th powers of the sum only disjoint products of Grassmann numbers will result in an $\bm{\xi}^K$ after appropriate permutations. As such, let $\lambda \vdash_l K$ denote the ordered partitions $\lambda$ of $K$ with $l$ pieces and $\lambda \vdash K$ denote the set of all partitions of $K$. By summing over each $l$-th order partition we can isolate the $\bm{\xi}^K$ terms,
    \begin{align}
    C_K=\sum_{l=1}^k \frac{(-1)^{l-1}}{l}\sum_{\lambda \vdash_l K}\mathrm{Sgn}(\pi_\lambda)\prod_{i=1}^l M_{\lambda_i}
    \end{align}
    where $\mathrm{Sgn}(\pi_\lambda)$ is the sign of the permutation reordering $\lambda$ into $K$. Using the fact that $|M_J|\le 1$ for all $J\subseteq [2m]$ and triangle inequality,
    \begin{align}
    |C_K| \le \sum_{l=1}^k \frac{1}{l} \sum_{\lambda \vdash_l K} 1 = \sum_{l=1}^k \frac{l!}{l} |\lambda \vdash_l K|
    \le \sum_{l=1}^k \frac{k!}{k} |\lambda \vdash_l K| = \frac{k!}{k}|\lambda \vdash K|,
    \end{align}
    where the appearance of $l!$ is to count all permutations of $K$ into $l$ partitions. The number of partitions of $K$, $|\lambda \vdash K|$, is given by the Bell number, $B_k$, which is upper-bounded by~\cite{berend_improved_2010}
    \begin{align}
        B_k < \left(\frac{0.792 k}{\ln(k+1)}\right)^k,
    \end{align}
    giving us that $|C_K| \le k^{2k}$. It is worth noting the points where this bound is loose and could be tightened in future work. Recall that for even states $\rho$, $M_J=0$ for odd order $J$. This implies that for odd order $K$ that $C_K=0$ and that for even order $K$ the bound overestimates the number of non-zero terms. However, for our purposes, we are interested in showing that the convergence in trace distance scales as $\tfrac{1}{n}$ and is at most polynomial in the number of modes.
\end{proof}

\begin{theorem}[Convergence Rate for Fermionic CLT in Trace Distance]\label{app:thm_fermionic_CLT}
    Let $\rho$ be an $m$-mode pure fermionic state and $\rho_G$ be the Gaussian state with the same second-moment quantities as $\rho$. Then $\rho^{\boxplus n}\rightarrow \rho_G$ in trace distance ($\norm{\cdot}_1$) with a rate
    \begin{align}
        \norm{\rho^{\boxplus n}-\rho_G}_1 \le \mathcal{O}\left(\frac{m^{12}}{n}\right).
    \end{align}
\end{theorem}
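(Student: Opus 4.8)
The plan is to move to the characteristic‑function picture, where Theorem~\ref{app:thm_fermionic_convolution_theorem} linearizes the $n$‑fold convolution, read off the rate at which the non‑Gaussian part dies, and only then translate back to trace norm; that last step carries essentially all of the difficulty.

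First I would iterate Theorem~\ref{app:thm_fermionic_convolution_theorem} along the recursion defining $\rho^{\boxplus n}$ to get $\chi_{\rho^{\boxplus n}}(\bm{\xi})=\chi_\rho(\bm{\xi}/\sqrt n)^n$, take logarithms, and use the cumulant expansion~\eqref{app:eqn_moment_generating_function} to write
\begin{align}
  \log\bigl[\chi_{\rho^{\boxplus n}}(\bm{\xi})\bigr]=\sum_{1\le|J|}n^{1-|J|/2}C_J\,\bm{\xi}^J ,
\end{align}
where only even $|J|$ survive since $\rho$ is even. The $|J|=2$ part is $n$‑independent and, because Gaussian states have exactly quadratic cumulant generating functions, equals $\log[\chi_{\rho_G}(\bm{\xi})]$ (here $\rho_G$ is the Gaussian with the same covariance matrix as $\rho$, and, like $\rho$, has vanishing first moments). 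Setting $R_n(\bm{\xi})=\sum_{|J|\ge4}n^{1-|J|/2}C_J\,\bm{\xi}^J$ I obtain the exact factorization $\chi_{\rho^{\boxplus n}}(\bm{\xi})=\chi_{\rho_G}(\bm{\xi})\,e^{R_n(\bm{\xi})}$, hence, applying the Fourier inversion formula $\rho=\int d^2\bm{\xi}\,\chi_\rho(\bm{\xi})E(-\bm{\xi})$,
\begin{align}
  \rho^{\boxplus n}-\rho_G=\sum_{p\ge1}\frac{1}{p!}\int d^2\bm{\xi}\;\chi_{\rho_G}(\bm{\xi})\,R_n(\bm{\xi})^{p}\,E(-\bm{\xi}),
\end{align}
a finite sum (each $R_n$ is built from central, nilpotent even‑degree‑$\ge4$ Grassmann monomials, so $R_n^{p}=0$ once $4p>2m$), every term of which already carries a factor $n^{1-2}=1/n$.

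Next I would bound each summand. Expanding $R_n(\bm{\xi})^{p}$ produces sums over disjoint \emph{blocks} $J_1,\dots,J_p\subseteq[2m]$ with $|J_i|\ge4$, each weighted by $\prod_i n^{1-|J_i|/2}C_{J_i}$ up to a reordering sign; with $k=\sum_i|J_i|\ge4p$ this gives $\prod_i n^{1-|J_i|/2}=n^{p-k/2}$ and, by the Cumulant Bound (Lemma~\ref{app:lemma_cumulant_bound}), $\prod_i|C_{J_i}|\le k^{2k}$. To return to an operator I would use Grassmann Fourier duality: multiplying $\chi_{\rho_G}$ by a monomial $\bm{\xi}^{K}$ before integrating against $E(-\bm{\xi})$ is a composition of $|K|$ first‑order operations on $\rho_G$, each a linear combination of left/right multiplications by a single ladder operator $\bm{a}_j$; since $\norm{\bm{a}_j}_\infty\le1$, each is bounded on $\norm{\cdot}_1$ by a constant $c$, so the operator $\int d^2\bm{\xi}\,\chi_{\rho_G}(\bm{\xi})\bm{\xi}^{K}E(-\bm{\xi})$ has trace norm at most $c^{|K|}\norm{\rho_G}_1=c^{|K|}$. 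Combining this with the crude count $\binom{2m}{k}k^{k}\le(2mk)^{k}$ of block configurations of total size $k$, and summing the $n$‑powers over $p$, I get a bound of the form $\norm{\rho^{\boxplus n}-\rho_G}_1\le\sum_{k\ge4}\bigl(C\,m\,\poly(k)\,n^{-1/4}\bigr)^{k}$.

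The argument then splits into two regimes. Once $n$ exceeds a fixed polynomial in $m$ — large enough that the base $C\,m\,\poly(k)\,n^{-1/4}$ stays below $\tfrac12$ for every $k\le2m$ — this geometric‑type series collapses to its $k=4$ term and $\norm{\rho^{\boxplus n}-\rho_G}_1=\mathcal{O}(\poly(m)/n)$; for smaller $n$ one uses the trivial estimate $\norm{\rho^{\boxplus n}-\rho_G}_1\le\norm{\rho^{\boxplus n}}_1+\norm{\rho_G}_1=2$, which is itself $\mathcal{O}(m^{12}/n)$ in that range. Carrying the constants through both regimes with the stated cumulant bound yields the claimed $\mathcal{O}(m^{12}/n)$. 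The hard part — and the only genuinely nonroutine one — is exactly this bookkeeping: keeping the $\sim4^{m}$ Grassmann monomials, the $k^{2k}$‑scale cumulant estimates, and the anticommutation signs from overwhelming the $n^{1-|J|/2}$ suppression. That tension is precisely what forces the small‑$n$ fallback and pushes the crossover, hence the exponent of $m$, as high as $12$; sharpening this exponent (in both trace distance and relative entropy) is left open.
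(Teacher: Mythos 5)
Your route is essentially the paper's: pass to characteristic functions via Theorem~\ref{app:thm_fermionic_convolution_theorem}, expand the cumulant generating function so that the non-Gaussian part carries weights $n^{1-|J|/2}$, control $|C_J|$ with Lemma~\ref{app:lemma_cumulant_bound}, bound the resulting Grassmann-monomial integrals in trace norm, and split into a "large $n$" regime plus a trivial fallback. One secondary remark: your bound $\norm{\int d^2\bm{\xi}\,\chi_{\rho_G}(\bm{\xi})\bm{\xi}^{K}E(-\bm{\xi})}_1\le c^{|K|}$ is asserted via an unproven duality claim ("each Grassmann factor is a first-order operation on $\rho_G$"); the paper gets this cleanly with $c=1$ by using unitary invariance of $\norm{\cdot}_1$ to diagonalize $\rho_G$, so the integral factorizes mode-by-mode into operators that are explicitly $a_j$, $a_j^\dagger$, $-2\{a_j^\dagger a_j\}$ or a single-mode Gaussian, each of trace norm at most $1$. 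You should either prove your duality claim or adopt that factorization.

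The genuine gap is in the final bookkeeping. By lumping all multi-block contributions of total degree $k$ together and using only $n^{p-k/2}\le n^{-k/4}$ (worst case $p=k/4$), your series is $\sum_{k\ge4}\bigl(C m\,\poly(k)\,n^{-1/4}\bigr)^k$ with, by your own count, $\poly(k)\gtrsim k^{3}$ (a factor $k^{2}$ per unit from $k^{2k}$ and at least $k$ per unit from the block-partition count). The series therefore collapses to its $k=4$ term only once $n\gtrsim m^{16}$, not $m^{12}$; in the window $m^{12}\lesssim n\lesssim m^{16}$ the trivial estimate $2$ is \emph{not} $\mathcal{O}(m^{12}/n)$, so the sentence "which is itself $\mathcal{O}(m^{12}/n)$ in that range" fails and your argument as written only delivers a uniform $\mathcal{O}(m^{16}/n)$ (it does give $\mathcal{O}(m^{4}/n)$ for fixed $m$ and $n$ large, but the uniform exponent is set by the crossover). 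The fix is not to lump by total degree: resum the disjoint-block expansion exactly, $\sum_{p\ge1}\frac{1}{p!}\norm{\cdots R_n^p\cdots}_1\le\exp\bigl[\sum_{|J|\ge4}n^{1-|J|/2}|C_J|\,c^{|J|}\bigr]-1$, and then linearize with $e^{x}-1\le\tfrac{2}{\log 3}x$ for $x\le\log 3$ (trivial bound otherwise). Then the regime condition involves only the single-block sum with weights $n^{1-k/2}$, whose $k=4$ term fixes the crossover at $n\sim m^{12}$, and both regimes give the claimed $\mathcal{O}(m^{12}/n)$. This resummation-plus-linearization is exactly the paper's induction over factors $e^{C_J\bm{\xi}^J}$ (using that even monomials commute), and it is the missing step between your expansion and the stated exponent.
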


\begin{proof}[Proof of Theorem~\ref{app:thm_fermionic_CLT}]\label{app:proof_fermionic_CLT}
We begin by representing $\rho^{\boxplus n}$ and $\rho_G$ in terms of their characteristic functions
    \begin{align}
        \norm{\rho^{\boxplus n}-\rho_G}_1
         &= \left\lVert \int d^2\bm{\xi} \left[\chi_{\rho^{\boxplus n}}(\bm{\xi})-\chi_{\rho_G}(\bm{\xi})\right]E(-\bm{\xi})\right\rVert_1
        =\norm{\int d^2\bm{\xi} \left[\chi_{\rho}(\bm{\xi}/\sqrt{n})^n-\chi_{\rho_G}(\bm{\xi})\right]E(-\bm{\xi})}_1,\\
        &=\norm{\int d^2 \bm{\xi} \left(e^{\sum_{4\leq |J|}n^{1-|J|/2}C_J \bm{\xi}^J}-1\right)e^{\sum_{|J|=2}C_J\bm{\xi}^J} E(-\bm{\xi})}_1,\label{app:eqn_trace_distance}
    \end{align}
    where the last equality comes from the Taylor expansion of the cumulant generating function and ignoring odd order cumulants which are zero,
    \begin{align}
        \log[\chi_{\rho^{\boxplus n}}(\bm{\xi})]=n\log[\chi_{\rho}(\bm{\xi}/\sqrt{n})]
        =n\sum_{J}\frac{1}{\sqrt{n}^{|J|}}C_J \bm{\xi}^J = \sum_{J}n^{1-|J|/2}C_J\bm{\xi}^J.
    \end{align}
    Due to the unitary invariance of the trace distance, we can assume that (without loss of generality) $\rho_G$ is diagonal. Therefore we have that $\exp\left[\sum_{|J|=2}C_J \bm{\xi}^J\right] = \exp\left[\sum_j -\beta_j \xi_j^*\xi_j\right]$ where $\beta_j$ is real-valued (the symplectic eigenvalues of the covariance matrix). To proceed we will make use of the following facts which follow immediately from Eq.~\eqref{app:eqn_ladder_E_operator}:
    \begin{align}\label{app:trace_norm_bounds}
    \begin{split}
        \norm{\int d^2 \bm{\xi} \exp\left[\sum_j \beta_j \xi_j^*\xi_j\right] E(-\bm{\xi})}_1\leq \norm{\rho_G}_1 \leq 1 &\quad \norm{\int d^2 \xi_j \xi_j\exp\left[ \beta_j \xi_j^*\xi_j\right] E(-\xi_j)}_1\leq \norm{a_j}_1 \leq 1,\\
        \norm{\int d^2 \xi_j \xi_j^*\exp\left[ \beta_j \xi_j^*\xi_j\right] E(-\xi_j)}_1\leq \norm{a_j}_1 \leq 1 &\quad \norm{\int d^2 \xi_j \xi_j \xi_j^*\exp\left[ \beta_j \xi_j^*\xi_j\right] E(-\xi_j)}_1 \leq \norm{-2\{a_j^\dagger a_j\}}_1 \leq 1.
    \end{split}
    \end{align}
    Consider a single value of $e^{C_J \bm{\xi}^J}$ contained within the exponential in Eq.~\eqref{app:eqn_trace_distance}, we can bound its contribution to the trace norm as follows
    \begin{align}
        &\norm{\int d^2\bm{\xi} \left(e^{C_J \bm{\xi}^J}-1\right)e^{\sum_j -\beta_j \xi_j^*\xi_j} E(-\bm{\xi})}_1 = \norm{\int d^2 \bm{\xi} C_J \bm{\xi}^J e^{\sum_j -\beta_j \xi_j^*\xi_j} E(-\bm{\xi})}_1,\\
        &= \norm{\int d^2 \bm{\xi}^J C_J \bm{\xi}^J \exp\left[\sum_{j\in J} -\beta_j \xi_j^*\xi_j\right] E(-\bm{\xi}^J)}_1 \norm{\int d^2 \bm{\xi}^{J^C} \exp\left[\sum_{j\in {J^C}} -\beta_j \xi_j^*\xi_j\right] E(-\bm{\xi}^{J^C})}_1,\\
        &\le |C_J| \norm{\int d^2 \bm{\xi}^J \bm{\xi}^J \exp\left[\sum_{j\in J} -\beta_j \xi_j^*\xi_j\right] E(-\bm{\xi}^J)}_1 \cdot 1 \le |C_J| \le e^{|C_J|}-1.
    \end{align}
    Let's walk through the intuition while defining the notation utilized above to prove this result. We can Taylor expand $e^{C_J \bm{\xi}^J}$ to obtain the first equality then use the fact that $\norm{A \ot B}_1 = \norm{A}_1 \norm{B}_1$ to break apart the integral. The act of breaking apart the integral is done mode-wise, so we need to group by modes affected by $J$ and those that are not. We denote this set by $\bar{J}\subseteq [m]$ where for $J=\{1,2,4\}\subseteq [6]$ this gives $\bar{J}=\{1,2\}\subseteq [3]$. This motivates the notation $d^2 \bm{\xi}^{J}=\prod_{j\in \bar{J}} d\xi_j^* d\xi_j$ and $E(-\bm{\xi}^J)=\prod_{j\in \bar{J}}E(-\xi_j)$. As a slight abuse of notation, we define $\bar{J}^C=[m]\setminus \bar{J}$ and utilize the notation $d^2\bm{\xi}^{J^C}=\prod_{j\in \bar{J}^C} d\xi_j^* d\xi_j$ and $E(-\bm{\xi}^{J^C})=\prod_{j\in \bar{J}^C}E(-\xi_j)$. With this, we can properly split up the integral and recognize that the $J^C$ component corresponds to a Gaussian state and is thus bounded above by $1$. Decomposing the $J$ component further and applying the bounds in Eq.~\eqref{app:trace_norm_bounds} gives that the whole quantity is bounded above by $|C_J|$. However, we loosen the bound further to $e^{|C_J|}-1$ as will become apparent as we move into bounding two terms. Now we consider bounding a term with $e^{C_J\bm{\xi}^J + C_K \bm{\xi}^K}$ rather than $e^{C_J \bm{\xi}}$ where we can see through analogous arguments that
    \begin{align}
        \norm{\int d^2\bm{\xi} \left(e^{C_J \bm{\xi}^J+C_K\bm{\xi}^K}-1\right)e^{\sum_j -\beta_j \xi_j^*\xi_j} E(-\bm{\xi})}_1 &\leq e^{|C_J|+|C_K|}-1.
    \end{align}
    The key idea is that $e^{C_J \bm{\xi}^J + C_K \bm{\xi}^K}=e^{C_J \bm{\xi}^J}e^{C_K \bm{\xi}^K}$ since only even order $C_J$ are nonzero and must commute with each other. By induction, this implies that
    \begin{align}
        \norm{\rho^{\boxplus n}-\rho_G}_1 \leq \exp\left[\sum_{4\leq|J|}n^{1-|J|/2}|C_J|\right]-1.
    \end{align}
    Trivially we have that $\norm{\rho^{\boxplus n}-\rho_G}_1\le 2$ by triangle inequality, thus for the bound to be nontrivial, we need $0\le x\le \log (3)$ in $e^{x}-1$. In this range, we have that $e^{x}-1 \le \tfrac{2}{\log (3)}x$, which implies that a bound on $x$ will give a bound on the trace distance up to constants. Focusing on bounding this term we can see that
    \begin{align}
        \sum_{4\leq|J|}n^{1-|J|/2}|C_J|&\le \sum_{k=4}^{2m}\sum_{|J|=k}n^{1-k/2}k^{2k} \leq \sum_{k=4}^{2m}(2m)^k n^{1-k/2}k^{2k} \leq \sum_{k=4}^{2m}(2m)^k n^{1-k/2}{2m}^{2k} = \sum_{k=4}^{2m} \left(\frac{(2m)^3}{\sqrt{n}}\right)^{k} n.
    \end{align}
    Observe that for fixed $k>2$ and $k$-dependent constant $0\le q_k$, there always exists some $N_k\in \mathbb{N}$ such that for all $n\ge N_k$ we have that $q_k/n^k \le 1/n$. Therefore only the first term in the sum matters asymptotically leading us to the desired bound of $\mathcal{O}(m^{12}/n)$.
\end{proof}
With this, we have demonstrated that the state with the same second-order cumulants as $\rho$ and $\rho^{\boxplus n}$ coincide in the limit $n\to \infty$. Finally, we must show that the closest Gaussian in relative entropy to $\rho$ coincides with these other two notations of the closest Gaussian.
\subsection{Closest Gaussian in Relative Entropy}
\begin{theorem}\label{app:relative_entropy_Gaussian}
    Let $\rho_G$ be the Gaussian state with the same second-order cumulants as $\rho$. Then $\rho_G = \min_{\rho_G'}S(\rho || \rho_G')$ where $S(\cdot||\cdot)$ denotes the relative entropy.
\end{theorem}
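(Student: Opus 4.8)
The plan is to reduce everything to the single structural fact that a full-rank fermionic Gaussian state is a Gibbs state of a Hamiltonian quadratic in the ladder operators, so that $-\log\rho_G'$ is, up to an additive constant, a quadratic form in $\bm a$. Concretely, I would first show that for any full-rank Gaussian $\rho_G'$ one can write $-\log\rho_G' = c + \sum_{1\le|J|\le 2} h_J\, \bm a^J$ for suitable coefficients $h_J$; this follows by applying a Bogoliubov (fermionic Gaussian) unitary that brings the covariance matrix of $\rho_G'$ to normal form, writing $\rho_G'$ as a tensor product of single-mode thermal states in those coordinates, taking the logarithm mode by mode, and transforming back. The immediate consequence is that $\Tr[\rho(-\log\rho_G')]$ depends on $\rho$ only through its first and second moments $M_J$ with $|J|\le 2$ (and, since $\rho$ is even, only through the $|J|=2$ data).

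The core of the argument is then a short Pythagorean-type identity for relative entropy. Using $S(\rho\|\rho_G') = -S(\rho) + \Tr[\rho(-\log\rho_G')]$ together with the fact that $\rho$ and $\rho_G$ have, by definition, the same second moments, we get $\Tr[\rho(-\log\rho_G')] = \Tr[\rho_G(-\log\rho_G')]$, hence
\begin{align}
S(\rho\|\rho_G') &= -S(\rho) + \Tr[\rho_G(-\log\rho_G')] \\
&= \bigl(S(\rho_G)-S(\rho)\bigr) + S(\rho_G\|\rho_G').
\end{align}
Specializing to $\rho_G' = \rho_G$ gives $S(\rho\|\rho_G) = S(\rho_G)-S(\rho)$ — the relation quoted in the main text, which collapses to $S(\rho_G)$ when $\rho$ is pure — and substituting back yields $S(\rho\|\rho_G') = S(\rho\|\rho_G) + S(\rho_G\|\rho_G')$. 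Since relative entropy is nonnegative and vanishes iff its two arguments agree, the right-hand side is minimized exactly when $S(\rho_G\|\rho_G')=0$, i.e.\ at $\rho_G'=\rho_G$, and the minimizer is unique. Equivalently, one can phrase the same computation as the statement that Gaussians maximize entropy at fixed covariance matrix, which is how the main text motivates it.

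Several points need care. Existence: I need that the covariance matrix of $\rho$ is a legitimate fermionic covariance matrix, so that a Gaussian $\rho_G$ with exactly those second moments exists; this is standard. Finiteness and support: if $\rho_G'$ is rank-deficient and $\supp\rho\not\subseteq\supp\rho_G'$ then $S(\rho\|\rho_G')=+\infty$ and there is nothing to prove, so the minimization is effectively over full-rank Gaussians, where the quadratic-logarithm representation is valid and all traces are finite; the degenerate "pure-mode" Gaussians can be excluded or reached by a limiting argument.

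The step I expect to be the main obstacle is the first one: cleanly establishing that every full-rank fermionic Gaussian state is quasi-free — that its logarithm is a quadratic form in the ladder operators — and carefully tracking the Grassmann and sign conventions of this paper so that $\Tr[\rho(-\log\rho_G')]$ is manifestly a function of $\{M_J : |J|\le 2\}$ alone. Once that structural fact is in hand, the remainder is the two-line identity above.
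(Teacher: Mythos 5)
Your proposal is correct and follows essentially the same route as the paper: write $\rho_G'=e^{H'}/Z(H')$ with $H'$ quadratic, use $\Tr[\rho H']=\Tr[\rho_G H']$ (equal second moments) to obtain the Pythagorean identity $S(\rho\|\rho_G')=S(\rho_G)-S(\rho)+S(\rho_G\|\rho_G')$, and minimize via nonnegativity of relative entropy. Your extra care about full-rank versus rank-deficient Gaussians and the existence of $\rho_G$ is a welcome refinement but does not change the argument.
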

\begin{proof}
    Let the closest Gaussian state to $\rho$ in relative entropy be denoted $\rho_G'$ and $\rho_G$ is the Gaussian with the same second-order cumulants as $\rho$. Since $\rho_G'$ is Gaussian, it is the thermal state of a Hamiltonian $H$ quadratic in raising and lowering operators and expressible as $\rho_G'=\exp(H')/Z(H')$ where $Z(H')=\tr{\exp(H')}$ is the partition function or normalization factor. Further, for any quadratic operator $Q$, we have that $\tr{\rho Q}=\tr{\rho_G Q}$ as all second-order quantities are captured by $\rho_G$. Using the definition of relative entropy,
    \begin{align}
    \begin{split}
        S(\rho || \rho_{G'}) &= \tr{\rho \log \rho}-\tr{\rho \log e^{H'}}+\log Z(H')= - \tr{\rho H'}+\log Z(H') - S(\rho),\\
        &= -\tr{\rho_G H'}+\log Z(H')-S(\rho)=-\tr{\rho_G \log \rho_G'}-S(\rho),\\
        &= -\tr{\rho_G \log \rho_G'}+S(\rho_G)-S(\rho_G)-S(\rho),\\
        &= S(\rho_G)-S(\rho)+S(\rho_G||\rho_G'),
    \end{split}
    \end{align}
    then minimizing over Gaussian distributions we have that
    \begin{align}
        \min_{H'} S(\rho || \rho_G')=&S(\rho_G)-S(\rho)
        +\min_{H'} S(\rho_G||\rho_G')=S(\rho_G)-S(\rho),
    \end{align}
    which is achieved by letting $\rho_G = \rho_G'$ as $S(\rho_G ||\rho_G')\ge 0$ and zero if and only if $\rho_G=\rho_{G'}$.
\end{proof}
Thus, all three Gaussians coincide and we can refer to $\rho_G$ without ambiguity.
\section{Measures of Non-Gaussian Magic}
After showing that all three Gaussian coincide, we can now develop multiple measures of non-Gaussian magic by measuring the difference between them and $\rho$. First, notice that in the proof of Theorem~\ref{app:relative_entropy_Gaussian} we showed that $S(\rho||\rho_G)=S(\rho_G)-S(\rho)$, thus for pure $\rho$ measuring the relative entropy to $\rho_G$ reduces to measuring $S(\rho_G)$. Therefore, $S(\rho||\rho_G)=S(\rho_G)$ is a measure of non-Gaussian magic as $S(\rho_G)=0$ if and only if $\rho$ is a pure Gaussian. This quantity can be calculated by measuring all second-order cumulants of $\rho$. We will now derive other measures of non-Gaussian magic through the violation of the matchgate identity, the violation of Wick's theorem, and the SWAP test.
\subsection{Matchgate Identity}
The statement that Gaussians are uniquely invariant under self-convolution means that Gaussians are unchanged by the fermionic beam splitter. That is, for all time $t$, $U(t)\ket{\psi}\ket{\psi}=\ket{\psi}\ket{\psi}$ where $U(t)=\exp(it\Lambda)$ and $\Lambda=i\Gamma=i\sum_{i}a_i^\dagger b_i +a_ib_i^\dagger$ is the generator of the fermionic beam splitter. Taylor expanding $U(t)=\mathbb{I}+it\Lambda+\mathcal{O}(t^2)$ implies for small $t$ and a pure Gaussian state $\ket{\psi}$
\begin{align}
    U(t)\ket{\psi}^{\ot 2} &\approx \ket{\psi}^{\ot 2}+it\Lambda \ket{\psi}^{\ot 2} = \ket{\psi}^{\ot 2},\\
    \implies \Lambda \ket{\psi}^{\ot 2}&=0.
\end{align}
Conversely, suppose $\Lambda\ket{\psi}^{\ot2}=0$, then $U(t)\ket{\psi}^{\ot 2}=\ket{\psi}^{\ot 2}$ for all time $t$ and $\ket{\psi}$ is Gaussian. Therefore we have arrived at the matchgate identity: $\Lambda \ket{\psi}\ket{\psi}=0$ if and only if $\ket{\psi}$ is Gaussian. Notice that here the matchgate identity is derived from convolution alone. Naturally, by computing the deviation of $\Lambda \ket{\psi}\ket{\psi}$ from zero we see that $\norm{\Lambda \ket{\psi}\ket{\psi}}_2$ is a new measure of non-Gaussian magic. However, since $\Lambda$ is quadratic, it will only depend on second-order cumulants. This fact is captured in the following lemma.
\begin{theorem}[Violation of Matchgate Identity]\label{app:matchgate}
    We have the following equality
    \begin{align}
        \norm{\Lambda \ket{\psi}\ket{\psi}}_2^2 = \frac{m}{2}-2\sum_{|J|=2}|M_J|^2 = \tfrac{m}{2}-\norm{\Sigma}_2^2
    \end{align}
    where $(\Sigma)_{ij}=\tr{\{\bm{a}_i\bm{a}_j\}\rho}$ is the covariance matrix for $\rho$.
\end{theorem}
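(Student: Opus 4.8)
\emph{Proof idea.} The plan is to turn $\norm{\Lambda\ket{\psi}^{\ot2}}_2^2$ into a quadratic form in the second moments of $\rho=\ketbra{\psi}{\psi}$ and then recognize the two right-hand sides. First I would record that $\Lambda=i\Gamma$ with $\Gamma=\sum_i(a_i^\dagger b_i+a_ib_i^\dagger)$ is Hermitian: because the $a$'s anticommute with the $b$'s one has $\Gamma^\dagger=\sum_i(b_i^\dagger a_i+b_ia_i^\dagger)=-\Gamma$, so $\Lambda^\dagger\Lambda=\Lambda^2=-\Gamma^2$ and $\norm{\Lambda\ket{\psi}^{\ot2}}_2^2=-\bra{\psi}^{\ot2}\Gamma^2\ket{\psi}^{\ot2}$. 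Next I would pass to Majorana operators, $c_\mu$ on register $A$ and $d_\mu$ on register $B$ for $\mu=1,\dots,2m$, with $\{c_\mu,c_\nu\}=\{d_\mu,d_\nu\}=2\delta_{\mu\nu}$ and $\{c_\mu,d_\nu\}=0$, via $a_k=\tfrac12(c_{2k-1}+ic_{2k})$ and $b_k=\tfrac12(d_{2k-1}+id_{2k})$. A one-line computation gives $a_k^\dagger b_k+a_kb_k^\dagger=\tfrac12(c_{2k-1}d_{2k-1}+c_{2k}d_{2k})$, hence $\Gamma=\tfrac12\sum_{\mu=1}^{2m}c_\mu d_\mu$, which is just the Majorana form of the beam-splitter generator of Lemma~\ref{app:lemma_fermionic_beam_splitter}.

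Now expand and push every $B$-operator to the right. Since $A$- and $B$-Majoranas anticommute, $c_\mu d_\mu c_\nu d_\nu=-c_\mu c_\nu d_\mu d_\nu$, and therefore
\begin{align}
    \Lambda^2=-\Gamma^2=\tfrac14\sum_{\mu,\nu=1}^{2m}c_\mu c_\nu\ot d_\mu d_\nu .
\end{align}
Evaluating in the product state $\ket{\psi}_A\ot\ket{\psi}_B$ factorizes each summand, and as both marginals equal $\rho$,
\begin{align}
    \norm{\Lambda\ket{\psi}^{\ot2}}_2^2=\tfrac14\sum_{\mu,\nu}\langle c_\mu c_\nu\rangle_\rho\,\langle d_\mu d_\nu\rangle_\rho=\tfrac14\sum_{\mu,\nu}\langle c_\mu c_\nu\rangle_\rho^2 .
\end{align}
The diagonal $\mu=\nu$ contributes $\sum_\mu\langle c_\mu^2\rangle_\rho=\sum_\mu 1=2m$ because $c_\mu^2=\id$, which is where the $\tfrac m2$ comes from. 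For $\mu\neq\nu$, Hermiticity of $c_\mu,c_\nu$ together with $c_\nu c_\mu=-c_\mu c_\nu$ forces $\langle c_\mu c_\nu\rangle_\rho$ to be purely imaginary, so $\langle c_\mu c_\nu\rangle_\rho^2=-|\langle c_\mu c_\nu\rangle_\rho|^2$; hence $\norm{\Lambda\ket{\psi}^{\ot2}}_2^2=\tfrac m2-\tfrac14\sum_{\mu\neq\nu}|\langle c_\mu c_\nu\rangle_\rho|^2$.

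It then remains to identify $\tfrac14\sum_{\mu\neq\nu}|\langle c_\mu c_\nu\rangle_\rho|^2$ with both $\norm{\Sigma}_2^2$ and $2\sum_{|J|=2}|M_J|^2$. The ladder vector $\bm{a}$ is obtained from the Majoranas by a linear change of basis $V$ with $\sqrt2\,V$ unitary, and under it the covariance matrix $\Sigma_{ij}=\Tr[\{\bm{a}_i\bm{a}_j\}\rho]$ of the statement is exactly the image of the antisymmetric matrix $\Omega_{\mu\nu}=\langle\tfrac12[c_\mu,c_\nu]\rangle_\rho$ — the $-\tfrac12$ built into the symmetric ordering $\{a_n^\dagger a_n\}$ being precisely the shift needed for this correspondence to be a clean (inhomogeneous-term-free) change of basis. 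Frobenius-norm invariance up to that scalar then gives $\norm{\Sigma}_2^2=\tfrac14\norm{\Omega}_2^2=\tfrac14\sum_{\mu\neq\nu}|\langle c_\mu c_\nu\rangle_\rho|^2$, while antisymmetry of $\Sigma$ (so $\Sigma_{ii}=0$) gives $\norm{\Sigma}_2^2=2\sum_{i<j}|\Sigma_{ij}|^2=2\sum_{|J|=2}|M_J|^2$, closing the chain. An alternative route that stays entirely in the paper's notation is to expand $\Gamma^2$ in ladder operators, commute all $b^{(\dagger)}$ to the right using $\{a_i,b_j^{(\dagger)}\}=0$, factorize the product-state expectation into pairs of second moments of $\rho$, and then use the CAR $\{a_i,a_i^\dagger\}=1$ to collect the constant; regrouping reproduces $\tfrac m2-2\sum_{|J|=2}|M_J|^2$ term by term.

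The only genuine obstacle is sign and normalization bookkeeping: getting the anticommutation between the two registers right when reordering $\Gamma^2$ (a single missed sign flips the result), carrying the $\tfrac12$'s through the ladder--Majorana dictionary, and verifying that the specific symmetric ordering $\{\cdot\}$ defining $\Sigma$ — with its $-\tfrac12$ shift — is the one making $\Sigma$ antisymmetric and the factor-of-$4$ matching exact. Everything else is short and mechanical.
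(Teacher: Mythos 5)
Your argument is correct, and it reaches the identity by a genuinely different route than the paper. The paper stays entirely in the ladder-operator basis: it writes $\Lambda^2=-\sum_{i,j}\Gamma_i\Gamma_j$, expands each $\Gamma_i\Gamma_j$ into four terms, treats $i\neq j$ and $i=j$ separately (the $i=j$ case via $\Gamma_i^2=2\{a_i^\dagger a_i\}\{b_i^\dagger b_i\}-\tfrac12$, which is where its $\tfrac m2$ comes from), factorizes the product-state expectations, and then does the combinatorial bookkeeping that matches the surviving terms to $2\sum_{|J|=2}|M_J|^2$. You instead pass to Majoranas, where $\Gamma=\tfrac12\sum_\mu c_\mu d_\mu$ collapses the whole computation into the single uniform sum $\tfrac14\sum_{\mu,\nu}\langle c_\mu c_\nu\rangle^2$: the diagonal gives $\tfrac m2$ from $c_\mu^2=\id$, the off-diagonal gives $-\tfrac14\sum_{\mu\neq\nu}|\langle c_\mu c_\nu\rangle|^2$ via anti-Hermiticity of $c_\mu c_\nu$, and the translation back to the paper's $\Sigma_{ij}=\Tr[\{\bm a_i\bm a_j\}\rho]$ is handled once and for all by Frobenius-norm invariance under the $\sqrt2$-unitary ladder--Majorana change of basis (your factor-of-$4$ and the role of the $-\tfrac12$ shift check out, as does $\norm{\Sigma}_2^2=2\sum_{|J|=2}|M_J|^2$ from antisymmetry). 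Your route buys a cleaner, case-free derivation that makes the origin of $\tfrac m2$ and the sign structure transparent, at the cost of importing the Majorana dictionary and a basis-change argument the paper avoids; the paper's route is more pedestrian but stays in the notation used everywhere else (moments $M_J$, symmetric ordering), so the identification with $\sum_{|J|=2}|M_J|^2$ is immediate rather than via $\Omega\mapsto\Sigma$. The two proofs share the same two load-bearing facts — factorization of product-state expectations of even quadratic operators and purely imaginary second moments — so your second sketched route ("entirely in the paper's notation") is essentially the paper's own proof.
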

\begin{proof}
    Let $\Gamma_i = a_i^\dagger b_i +a_i b_i^\dagger$, then
    \begin{align}
        \norm{\Lambda \ket{\psi}\ket{\psi}}_2^2 = \tr{\Lambda^2 \rho^{\ot 2}}=-\sum_{i,j}\tr{\Gamma_i\Gamma_j \rho^{\ot 2}}.
    \end{align}
    Expanding out $\Gamma_i\Gamma_j$ we see that
    \begin{align}
        \Gamma_i\Gamma_j &= a_i^\dagger b_i a_j^\dagger b_j + a_i^\dagger b_i a_jb_j^\dagger + a_ib_i^\dagger a_j^\dagger b_j + a_ib_i^\dagger a_jb_j^\dagger,\\
        &= -a_i^\dagger a_j^\dagger b_i b_j -a_i^\dagger a_j b_i b_j^\dagger - a_ia_j^\dagger b_i^\dagger b_j - a_ia_jb_i^\dagger b_j^\dagger,\\
        &= -(a_i^\dagger a_j^\dagger)(b_j^\dagger b_i^\dagger )^\dagger-(a_i^\dagger a_j)(b_jb_i^\dagger)^\dagger-(a_ia_j^\dagger)(b_j^\dagger b_i)^\dagger-(a_ia_j)(b_jb_i)^\dagger.
    \end{align}
    Assuming for the moment that $i\neq j$, we have the following,
    \begin{align}
        \tr{\Gamma_i\Gamma_j \rho^{\ot 2}}&=-\tr{a_i^\dagger a_j^\dagger \rho}\tr{(b_j^\dagger b_i^\dagger )^\dagger \rho}-\tr{a_i^\dagger a_j \rho}\tr{(b_jb_i^\dagger )^\dagger \rho}\\
        &-\tr{a_ia_j^\dagger \rho}\tr{(b_j^\dagger b_i)^\dagger \rho}-\tr{a_ia_j \rho}\tr{(b_jb_i)^\dagger \rho},\\
        &= -\tr{a_i^\dagger a_j^\dagger \rho}\overline{\tr{(b_j^\dagger b_i^\dagger )\rho}} +\cdots\\
        &=\left|\tr{a_i^\dagger a_j^\dagger \rho}\right|^2+\left|\tr{a_i^\dagger a_j \rho}\right|^2 + \left|\tr{a_ia_j^\dagger \rho}\right|^2+\left|\tr{a_ia_j\rho}\right|^2
    \end{align}
    where we used the fact that $i\neq j$ to anti-commute the $b$ terms into the same ordering as the $a$ terms. These terms correspond to the squared magnitude of second-order moments. Two of each type of term will appear for the cases of $i<j$ and $i>j$ where we are free to switch the order of $i$ and $j$ in either case due to us having the square of the magnitude. Now suppose $i=j$, then
    \begin{align}
        \Gamma_i^2 &= -(a_i^\dagger a_i)(b_ib_i^\dagger) -(a_ia_i^\dagger)(b_i^\dagger b_i)=-(a_i^\dagger a_i)(1-b_i^\dagger b_i)-(1-a_i^\dagger a_i)(b_i^\dagger b_i)=-a_i^\dagger a_i - b_i^\dagger b_i +2a_i^\dagger a_ib_i^\dagger b_i,\\
        &=-\{a^\dagger_ia_i\}-\tfrac{1}{2}-\{b_i^\dagger b_i\}-\tfrac{1}{2}+2(\{a_i^\dagger a_i\}+\tfrac{1}{2})(\{b_i^\dagger b_i\}+\tfrac{1}{2}),\\
        &=2\{a^\dagger_i a_i\}\{b^\dagger_ib_i\}-\tfrac{1}{2},
    \end{align}
    where we recall that $\{a_i^\dagger a_i\}=a_i^\dagger a_i-\tfrac{1}{2}=-\{a_ia_i^\dagger\}$. Therefore,
    \begin{align}
        \tr{\Gamma_i^2 \rho^{\ot 2}} = 2\tr{\{a_i^\dagger a_i\}\rho}\tr{\{b_i^\dagger b_i\}\rho}-\tfrac{1}{2} = 2\left|\tr{\{a_i^\dagger a_i\}\rho }\right|^2 -\tfrac{1}{2}.
    \end{align}
    Combining all of these expressions together we obtain the following:
    \begin{align}
         \tr{\Lambda^2 \rho^{\ot 2}} &= -\sum_{i,j}\tr{\Gamma_i\Gamma_j \rho^{\ot 2}} = -\sum_{i\neq j}\tr{\Gamma_i\Gamma_j\rho^{\ot 2}}-\sum_{i=j}\tr{\Gamma_i^2 \rho^{\ot 2}},\\
        &=-\sum_{i\neq j}\left(\left|\tr{a_i^\dagger a_j^\dagger \rho}\right|^2+\left|\tr{a_i^\dagger a_j \rho}\right|^2 + \left|\tr{a_ia_j^\dagger \rho}\right|^2+\left|\tr{a_ia_j\rho}\right|^2 \right)-\sum_{i=j}\left(2\left|\tr{\{a_i^\dagger a_i\}\rho }\right|^2 -\tfrac{1}{2}\right),\\
        &=\frac{m}{2}-2\sum_{|J|=2}|M_J|^2.
    \end{align}
    Recall that by convention the sum over $J$ is over the subsets of $[2m]$ with cardinality 2 where elements are expressed in increasing order such that only $\{1,2\}$ will appear but $\{2,1\}$ will not also appear to avoid double counting. This accounts for the factor of $2$ outside the sum. If we define $(\Sigma)_{ij}=\tr{\{\bm{a}_i\bm{a}_j\}\rho}$ to be the $2m\times 2m$ covariance matrix for $\rho$ then we have that
    \begin{align}
        \norm{\Lambda \ket{\psi}\ket{\psi}}_2^2 = \tr{\Lambda^2 \rho^{\ot 2}}=\frac{m}{2}-2\sum_{|J|=2}|M_J|^2 = \tfrac{m}{2}-\norm{\Sigma}_2^2
    \end{align}
    where $\norm{\cdot}_2$ denotes the Schatten $2$-norm leading to the desired result.
\end{proof}
It is now clear that $\norm{\Lambda \ket{\psi}\ket{\psi}}_2^2$ only depends on the second-order cumulants of $\rho$ or truly $\rho_G$. Therefore, $\norm{\Lambda\ket{\psi}\ket{\psi}}_2^2$ and $S(\rho_G)$ capture the same information about non-Gaussian magic. However, rather than estimating all second-order cumulants to compute $S(\rho_G)$, we can use the following lemma instead.
\begin{lemma}\label{appendix:lemma_second_order_matchgate}
To compute the violation of the matchgate identity, we can estimate the following quantity
    \begin{align}
        \frac{d^2}{dt^2}\norm{(\exp(it\Lambda)-\mathbb{I})\ket{\psi}\ket{\psi}}_2 \Big|_{t=0}=\norm{\Lambda \ket{\psi}\ket{\psi}}_2^2.
    \end{align}
\end{lemma}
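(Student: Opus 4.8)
The plan is to reduce the identity to a one-line functional-calculus expansion carried out on the \emph{squared} norm $g(t):=\norm{(\exp(it\Lambda)-\mathbb{I})\ket{\psi}\ket{\psi}}_2^2$, which is the smooth object of interest: the unsquared norm $\sqrt{g(t)}$ behaves like $|t|\,\norm{\Lambda\ket{\psi}\ket{\psi}}_2$ near the origin and is not classically twice differentiable there whenever $\Lambda\ket{\psi}\ket{\psi}\neq 0$, so the statement is to be understood as the degree-two Taylor coefficient of $g$ (the ``second-order change'' at $t=0$). First I would use that $\Lambda$ is Hermitian, so $\exp(it\Lambda)$ is unitary and $(\exp(it\Lambda)-\mathbb{I})^\dagger(\exp(it\Lambda)-\mathbb{I}) = 2\mathbb{I}-\exp(it\Lambda)-\exp(-it\Lambda) = 2(\mathbb{I}-\cos(t\Lambda))$. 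Taking the expectation in $\ket{\psi}\ket{\psi}$,
\begin{align}
  g(t) = 2\big(1-\bra{\psi}\bra{\psi}\cos(t\Lambda)\ket{\psi}\ket{\psi}\big) = 2\big(1-\Re\,\bra{\psi}\bra{\psi}\exp(it\Lambda)\ket{\psi}\ket{\psi}\big),
\end{align}
which is manifestly even in $t$, so $g(0)=0$ and $g'(0)=0$ come for free and no first-moment quantity such as $\bra{\psi}\bra{\psi}\Lambda\ket{\psi}\ket{\psi}$ ever has to be tracked.

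Next I would Taylor expand $\cos(t\Lambda)=\mathbb{I}-\tfrac{t^2}{2}\Lambda^2+\mathcal{O}(t^4)$ in the finite-dimensional functional calculus ($\Lambda$ is a bounded Hermitian operator on the $2^{2m}$-dimensional two-copy Hilbert space, so the series converges for all $t$), which gives
\begin{align}
  g(t) = t^2\,\bra{\psi}\bra{\psi}\Lambda^2\ket{\psi}\ket{\psi}+\mathcal{O}(t^4) = t^2\,\norm{\Lambda\ket{\psi}\ket{\psi}}_2^2+\mathcal{O}(t^4),
\end{align}
where the last equality is again Hermiticity, $\bra{\psi}\bra{\psi}\Lambda^2\ket{\psi}\ket{\psi}=\norm{\Lambda\ket{\psi}\ket{\psi}}_2^2=\tr{\Lambda^2\rho^{\ot 2}}$ --- exactly the quantity evaluated in closed form in Theorem~\ref{app:matchgate}. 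Matching the coefficient of $t^2$ (equivalently $\tfrac12 g''(0)$, i.e. the second-order change at the origin) then yields $\norm{\Lambda\ket{\psi}\ket{\psi}}_2^2$, which is the claimed identity.

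I expect no serious obstacle: the only point requiring care is to read the norm as the squared norm, after which the result is a two-line functional-calculus computation followed by an appeal to Theorem~\ref{app:matchgate} for the evaluation of $\bra{\psi}\bra{\psi}\Lambda^2\ket{\psi}\ket{\psi}$. The payoff I would emphasize is operational: the identity $g(t)=2\big(1-\Re\,\bra{\psi}\bra{\psi}\exp(it\Lambda)\ket{\psi}\ket{\psi}\big)$ shows that the matchgate-identity violation can be estimated by preparing two copies of $\ket{\psi}$, evolving them under the fermionic beam splitter $U(t)=\exp(it\Lambda)$ for a handful of small values of $t$, estimating the real part of the overlap $\bra{\psi}\bra{\psi}U(t)\ket{\psi}\ket{\psi}$ by an interferometric test, and finite-differencing in $t$ --- thereby recovering $\tfrac{m}{2}-\norm{\Sigma}_2^2$ without performing full covariance-matrix tomography on $\rho$.
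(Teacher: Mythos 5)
Your proof is correct and follows essentially the same route as the paper's: expand the squared norm $\norm{(e^{it\Lambda}-\mathbb{I})\ket{\psi}\ket{\psi}}_2^2=\bra{\psi}\bra{\psi}\left(2\mathbb{I}-e^{it\Lambda}-e^{-it\Lambda}\right)\ket{\psi}\ket{\psi}=t^2\,\bra{\psi}\bra{\psi}\Lambda^2\ket{\psi}\ket{\psi}+\mathcal{O}(t^3)$ and read off the quadratic coefficient, which is exactly the computation in the paper's proof. Your side remarks---that the unsquared norm is not classically twice differentiable at $t=0$ when $\Lambda\ket{\psi}\ket{\psi}\neq 0$, and that the literal second derivative of the squared norm carries an extra factor of $2$, so the identity should be read as the degree-two Taylor coefficient ("second-order change") of the squared norm---correctly flag an imprecision in the stated lemma that the paper's own proof also glosses over.
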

\begin{proof}
    Observe that for small $t$ we have that,
    \begin{align}
        \exp(it\Lambda)-\mathbb{I}=it\Lambda-\frac{t^2}{2}\Lambda^2+\OC(t^3).
    \end{align}
    Thus,
    \begin{align}
        \norm{(\exp(it\Lambda)-\mathbb{I})\ket{\psi}\ket{\psi}}_2^2=\bra{\psi}\bra{\psi}(\exp(-it\Lambda)-\mathbb{I})(\exp(it\Lambda)-\mathbb{I})\ket{\psi}\ket{\psi}=t^2\bra{\psi}\bra{\psi}\Lambda^2\ket{\psi}\ket{\psi}+\OC(t^3).
    \end{align}
    Therefore after taking the second derivative and evaluating it at zero, we arrive at the desired result.
\end{proof}

\subsection{Wick's Theorem}\label{app:Wicks_Cumulants}
Since any Gaussian state is completely characterized by its covariance matrix $(\Sigma)_{ij}=\tr{\{\bm{a}_i\bm{a}_j\rho\}}$ any high-order moment can be decomposed into sums of products of lower-order moments. This is precisely the statement of Wick's theorem~\cite{wick_evaluation_1950} in QFT or Isserlis' theorem~\cite{isserlis_formula_1918} in probability theory. Wick's theorem is the result of expanding the following expression for a pure Gaussian state $\rho$:
\begin{align}
    M_J &= \frac{\partial }{\partial \bm{\xi}^J}\chi_\rho(\bm{\xi})\bigg\vert_{\bm{\xi}=0}=\frac{\partial }{\partial \bm{\xi}^J}\exp\left[\log\chi_\rho(\bm{\xi})\right]\bigg\vert_{\bm{\xi}=0}=\frac{\partial }{\partial \bm{\xi}^J}\exp\left[\sum_{|K|=2}C_K\bm{\xi}^K\right]\Bigg\vert_{\bm{\xi}=0}.
\end{align}
Concretely, for a nonempty subset $J\subseteq [2m]$ the expansion for $M_J$ of a Gaussian state $\rho$ is given by the Pfaffian of the submatrix $\Sigma\vert_J$ of the covariance matrix (i.e. the rows and columns with indices given by $J$). In this work, the Pfaffian of a $2m\times 2m$ anti-symmetric matrix matrix is defined as
\begin{align}
    \mathrm{Pf}[\Sigma]=\frac{1}{ 2^n n!}\sum_{\sigma \in S_{2n}}\mathrm{Sng}(\sigma)\Sigma_{\sigma_1,\sigma_2}\cdots \Sigma_{\sigma_{2n-1},\sigma_{2n}}.
\end{align}
For instance, with $m=2$ modes, the covariance matrix is
\begin{align}
    \Sigma  = \begin{bmatrix}
        0 & M_{12} & M_{13} & M_{14}\\
        -M_{12} & 0 & M_{23} & M_{24} \\
        -M_{13} & -M_{23} & 0 & M_{34}\\
        -M_{14} & -M_{24} & -M_{34} &0
    \end{bmatrix}.
\end{align}
Using Wick's theorem we have that
\begin{align}
    M_{\{1,2,3,4\}} = \mathrm{Pf} \begin{bmatrix}
        0 & M_{12} & M_{13} & M_{14}\\
        -M_{12} & 0 & M_{23} & M_{24} \\
        -M_{13} & -M_{23} & 0 & M_{34}\\
        -M_{14} & -M_{24} & -M_{34} &0
    \end{bmatrix} = M_{12}M_{34}-M_{13}M_{24}+M_{14}M_{23},
\end{align}
and
\begin{align}
    M_{\{1,2,3\}} = \mathrm{Pf} \begin{bmatrix}
        0 & M_{12} & M_{13} \\
        -M_{12} & 0 & M_{23}  \\
        -M_{13} & -M_{23} & 0 \\
    \end{bmatrix} = 0,
\end{align}
and
\begin{align}
    M_{\{2,3\}} = \mathrm{Pf} \begin{bmatrix}
 0 & M_{23}  \\
-M_{23} & 0 \\
    \end{bmatrix} = M_{23}.
\end{align}
A more convenient way to do these computations is diagrammatically using Feynman diagrams. For instance, with $m=3$ we can arrange the numbers $1-6$ at the corners of a hexagon and count the number of times two lines cross to determine the sign. If the number of crossings is even then the sign is positive, if it's odd then the sign is negative. The number of terms corresponds to the rotational symmetries of each diagram. We are in essence treating each node as a Grassmann number initially in order $123456\mapsto\xi_1^*\xi_1\xi_2^*\xi_2\xi_3^*\xi_3$ which we permute into each possible ordering.
\begin{center}
    \begin{minipage}{0.19\textwidth}
        \centering
        \begin{tikzpicture}
            \node[circle,draw] (1) at (90:1.25) {1};
            \node[circle,draw] (2) at (30:1.25) {2};
            \node[circle,draw] (3) at (330:1.25) {3};
            \node[circle,draw] (4) at (270:1.25) {4};
            \node[circle,draw] (5) at (210:1.25) {5};
            \node[circle,draw] (6) at (150:1.25) {6};

            \draw (1) -- (2);
            \draw (3) -- (4);
            \draw (5) -- (6);
        \end{tikzpicture}
        \vspace{0.3cm}
        \parbox{\textwidth}{\centering $0$ Crossings $(+1)$,\\ $2$ terms.}
    \end{minipage}
    \begin{minipage}{0.19\textwidth}
        \centering
        \begin{tikzpicture}
            \node[circle,draw] (1) at (90:1.25) {1};
            \node[circle,draw] (2) at (30:1.25) {2};
            \node[circle,draw] (3) at (330:1.25) {3};
            \node[circle,draw] (4) at (270:1.25) {4};
            \node[circle,draw] (5) at (210:1.25) {5};
            \node[circle,draw] (6) at (150:1.25) {6};

            \draw (1) -- (2);
            \draw (6) -- (3);
            \draw (5) -- (4);
        \end{tikzpicture}
        \vspace{0.3cm}
        \parbox{\textwidth}{\centering $0$ Crossings $(+1)$,\\ $3$ terms.}
    \end{minipage}
    \begin{minipage}{0.19\textwidth}
        \centering
        \begin{tikzpicture}
            \node[circle,draw] (1) at (90:1.25) {1};
            \node[circle,draw] (2) at (30:1.25) {2};
            \node[circle,draw] (3) at (330:1.25) {3};
            \node[circle,draw] (4) at (270:1.25) {4};
            \node[circle,draw] (5) at (210:1.25) {5};
            \node[circle,draw] (6) at (150:1.25) {6};

            \draw (1) -- (4);
            \draw (5) -- (2);
            \draw (6) -- (3);
        \end{tikzpicture}
        \vspace{0.3cm}
        \parbox{\textwidth}{\centering $3$ Crossings $(-1)$,\\ $1$ terms}
    \end{minipage}
    \begin{minipage}{0.19\textwidth}
        \centering
        \begin{tikzpicture}
            \node[circle,draw] (1) at (90:1.25) {1};
            \node[circle,draw] (2) at (30:1.25) {2};
            \node[circle,draw] (3) at (330:1.25) {3};
            \node[circle,draw] (4) at (270:1.25) {4};
            \node[circle,draw] (5) at (210:1.25) {5};
            \node[circle,draw] (6) at (150:1.25) {6};

            \draw (1) -- (5);
            \draw (2) -- (6);
            \draw (3) -- (4);
        \end{tikzpicture}
        \vspace{0.3cm}
        \parbox{\textwidth}{\centering $1$ Crossings $(-1)$,\\ $6$ terms.}
    \end{minipage}
    \begin{minipage}{0.19\textwidth}
        \centering
        \begin{tikzpicture}
            \node[circle,draw] (1) at (90:1.25) {1};
            \node[circle,draw] (2) at (30:1.25) {2};
            \node[circle,draw] (3) at (330:1.25) {3};
            \node[circle,draw] (4) at (270:1.25) {4};
            \node[circle,draw] (5) at (210:1.25) {5};
            \node[circle,draw] (6) at (150:1.25) {6};

            \draw (1) -- (3);
            \draw (2) -- (5);
            \draw (6) -- (4);
        \end{tikzpicture}
        \vspace{0.3cm}
        \parbox{\textwidth}{\centering $2$ Crossings $(+1)$,\\ $3$ terms.}
    \end{minipage}
\end{center}
One can use a similar mathematical derivation to find an expression for the cumulants in terms of the moments of the state:
\begin{align}
    C_K = \frac{\partial }{\partial \bm{\xi}^K}\log [\chi_\rho(\bm{\xi})]\bigg\vert_{\bm{\xi}=0}=\frac{\partial }{\partial \bm{\xi}^K}\log\left[1+\sum_{1\le |J|} M_J \bm{\xi}^J\right]\Bigg\vert_{\bm{\xi}=0}=\sum_{l=1}^k \frac{(-1)^{l-1}}{l}\sum_{\lambda \vdash_l K}\mathrm{Sgn}(\pi_\lambda)\prod_{i=1}^l M_{\lambda_i}
\end{align}
where $\lambda \vdash_l K$ denotes the ordered partitions $\lambda$ of $K$ with $l$ pieces and $\mathrm{Sgn}(\pi_\lambda)$ is the sign of the permutation reordering $\lambda$ into $K$. This formula arises from brute force and inspection however, it resembles the one from classical probability theory~\cite{peccati_wiener_2011} with an additional $\mathrm{Sgn}(\pi_\lambda)$ to account for anti-commutation relations. To measure non-Gaussian magic, one can directly estimate $C_J$, however, the number of terms that must be computed grows rapidly. Thus, low-order cumulants or weak interactions are what can be measured efficiently in practice.

\subsection{SWAP Test}\label{app:sec_SWAP_test}
So far, relative entropy and the matchgate identity only depend on second-order cumulants, in contrast, the violation of Wick's theorem captures magic in higher-order cumulants but is challenging to compute. Utilizing the fact that $\rho^{\boxplus 2}=\rho$ iff $\rho$ is a pure Gaussian state, we can measure the difference between $\rho^{\boxplus 2}$ and $\rho$ via the SWAP test. To start, let's focus on the action of SWAP on two single modes. By inspecting the behavior of SWAP on the computational basis, we see that it can be expressed in terms of ladder operators as
\begin{align}
    \mathrm{SWAP} = a^\dagger b +ab^\dagger +a^\dagger a b^\dagger b+aa^\dagger bb^\dagger= a^\dagger b +ab^\dagger+2\{a^\dagger a\}\{b^\dagger b\}+\tfrac{1}{2}\mathbb{I},
\end{align}
where $\{a^\dagger a\}=a^\dagger a -\tfrac{1}{2}$. Therefore, functionally the SWAP test has the following form
\begin{align}
    \tr{\mathrm{SWAP}\rho \ot \rho} &= \tr{\prod_{j=1}^m [a^\dagger_i b_i+a_ib_i^\dagger+2\{a_i^\dagger a_i\}\{b_i^\dagger b_i\}+\tfrac{1}{2}\mathbb{I}_i]\rho \ot \rho}=\sum_{J\subseteq [2m]}\delta(J)|M_J|^2,
\end{align}
where $\delta(J)$ is some function of $J$. To see why it has this form, observe that expanding out this product results in any order moment to appear as we can choose to either apply $\mathbb{I}_i,a_i,a_i^\dagger,\{a_i^\dagger a_i\}$ on system $A$, while system $B$ will always receive the adjoint of that choice. Taking the adjoint for system $B$ will reverse the ordering, however, since every term will at worst anti-commute, this will only result in an overall sign on that term. Thus, $\delta(J)$ records these anti-commutations and the additional factors that appear depending on the moment. Crucially, $\delta(J)$ is independent of $\rho$ and purely a combinatorial object. When performing the SWAP test on $\rho$ and $\rho^{\boxplus 2}$ we can use the quantum convolution theorem (Theorem~\ref{app:thm_fermionic_convolution_theorem}) and the chain rule to see that
\begin{align}
    \chi_{\rho^{\boxplus 2}}(\bm{\xi})=\chi_\rho(\bm{\xi}/\sqrt{2})^2 \quad M_J'=\tfrac{\partial}{\partial \bm{\xi}^J}\chi_{\rho^{\boxplus 2}}(\bm{\xi})\Big\vert_{\bm{\xi}=0}=2^{1-|J|/2}M_J
\end{align}
which implies
\begin{align}
    \tr{\mathrm{SWAP}\rho \ot \rho^{\boxplus 2}}=\sum_{J\subseteq [2m]}\delta(J)2^{1-|J|/2}|M_J|^2.
\end{align}
Since $\rho$ is assumed to be pure, we have that $\tr{\mathrm{SWAP}\rho \ot \rho}=\tr{\rho^2}=1$ such that $|\tr{\mathrm{SWAP}\rho\ot\rho^{\boxplus 2}}-1|$ is a measure of non-Gaussian magic that captures information about higher than second-order cumulants. Indeed, we have that $\sum_{|J|>2}\delta(J)|M_J|^2=1-\sum_{|J|\le 2}\delta(J)|M_J|^2$ whenever $\rho$ is pure, thus moments up to second-order contain information about moments strictly greater than second-order. The SWAP test utilizes this fact to extract information about high-order moments and thus high-order cumulants while being efficient to estimate.

\section{Connection to Work By Lyu and Bu}\label{app:lyu_connection}
During the preparation of this work, the authors became aware of another paper by Lyu and Bu~\cite{lyu_fermionic_2024} studying fermionic convolution, the central limit theorem, and the application to measures of non-Gaussian magic. Here we outline the connection between the approaches taken by the two works through the formulation of the characteristic functions.

The begin, let $\{\gamma_j\}_{j=1}^{2m}$ denote the set of $2m$ Hermitian Majorana operators expressed in terms of ladder operators as
\begin{align}
    \gamma_{2j-1}=a_j+a_j^\dagger \quad \gamma_{2j}=i(a_j-a_j^\dagger),
\end{align}
with $\{\gamma_j,\gamma_k\}=2\delta_{jk}\mathbb{I}$. This is in effect a change of basis similar to that of using $(x,p)$ or $(a^\dagger,a)$ in bosonic systems with the caveat of an additional factor of $\tfrac{1}{\sqrt{2}}$,
\begin{align}
    \frac{1}{\sqrt{2}}\begin{bmatrix}
        \gamma_{2j-1} \\ \gamma_{2j}
    \end{bmatrix}=\frac{1}{\sqrt{2}}\begin{bmatrix}
        1 & 1 \\ i & -i
    \end{bmatrix}
    \begin{bmatrix}
        a_j^\dagger \\ a_j
    \end{bmatrix}=\begin{bmatrix}
        \tilde{\gamma}_{2j-1} \\ \tilde{\gamma}_{2j}
    \end{bmatrix}.
\end{align}
To avoid confusion from different notation and conventions used between the two works, let $\gamma$ and $\eta$ exclusively refer to the Majorana operators and Grassmann numbers utilized in~\cite{lyu_fermionic_2024} while $\tilde{\gamma}$ and $\tilde{\eta}$ will exclusively refer to the rescaled Majorana operators defined above and Grassmann numbers as defined in this work. The only difference between the definition of Grassmann numbers $\eta$ and $\tilde{\eta}$, is the enforcing of $\eta=\eta^*$ in~\cite{lyu_fermionic_2024} versus $\tilde{\eta}\neq \tilde{\eta}^*$ in this work. Just as bosonic operators $(x,p)$ quantize $\mathbb{R}^2$ with both being commutative, uncountable, and Hermitian or $(a^\dagger,a)$ quantizing $\mathbb{C}$ with both being commutative, uncountable, but not Hermitian, the same analogy plays out here. As seen throughout this work, the main utility of the characteristic function or moment-generating function is to Taylor expand and associate to each moment a number that recovers the original algebraic properties of the system. This is done by taking the Fourier transform of the density matrix $\rho$. Again, the distinction is in how the Fourier transform is taken. The classical Fourier transform of $f\in L^1(\mathbb{C}^m)$ is of the form
\begin{align}
\FC[f](\bm{t})=\int e^{i\langle \bm{t},\bm{x}\rangle}f(\bm{x})d\bm{x}
\end{align}
where $e^{i\langle \bm{t}, \bm{x}\rangle}$ is known as the kernel. In the generalized theory of harmonic analysis, we can take the Fourier transform of a function $f\in L^1(G)$ where $G$ is a topological group by replacing the transition invariant Lebesgue measure $d\bm{x}$ with the analogous Haar measure $d\mu$. The kernel is (loosely speaking) replaced by the "trace" of the representation $G$ in its Fourier dual $\hat{G}$. Since we are utilizing two different representations we must use the two different kernels,
\begin{align}
    D(\bm{\eta})=\exp\left( \sum_{j=1}^{2m} \gamma_j \eta_j \right) \quad \tilde{D}(\bm{\tilde{\eta}})=\exp\left(\sum_{j=1}^m a_j^\dagger \tilde{\eta}_j + a_j\tilde{\eta}_j^*\right),
\end{align}
where the former is utilized in~\cite{lyu_fermionic_2024} while the latter is the fermionic analog of the Weyl operator utilized in this work. Then the moment-generating function of $\rho$ is given by
\begin{align}
\chi_\rho(\bm{\eta})=\tr{D(\bm{\eta})\rho} \quad \tilde{\chi}_\rho(\tilde{\bm{\eta}})=\tr{\tilde{D}(\tilde{\bm{\eta}})\rho}.
\end{align}
\end{document}